\documentclass[11pt]{article}
\usepackage{fullpage}
\usepackage{epsfig}
\usepackage{amssymb}
\usepackage{amsmath}
\usepackage{pifont}

\makeatletter \@beginparpenalty=10000 \makeatother

\hyphenation{area areas}
\hyphenation{chapter}
\hyphenation{circuits circuit}
\hyphenation{crypto-complexity}
\hyphenation{ent-schei-dungs-prob-lem}
\hyphenation{par-allel par-allelize par-allelized threshold Hem-a-spaan-dra}
\hyphenation{Hemachan-dra}
\hyphenation{Papa-di-mi-triou}
\hyphenation{Fa-li-szew-ski}
\hyphenation{polylog-a-rith-mic}
\hyphenation{rational}
\hyphenation{theory theorist theorists theoretical theorem theorems}
\hyphenation{Thierauf}
\hyphenation{Wa-ta-nabe Ogi-hara Ogi-wara Mi-tsu-nori Sei-no-suke}
\hyphenation{Schear Heribert Eduardo}
\mathcode`\0="0030      %
\mathcode`\1="0031
\mathcode`\2="0032
\mathcode`\3="0033
\mathcode`\4="0034
\mathcode`\5="0035
\mathcode`\6="0036
\mathcode`\7="0037
\mathcode`\8="0038
\mathcode`\9="0039

\makeatletter%
\newcommand{\singlespacing}{\let\CS=
\@currsize\renewcommand{\baselinestretch}{1}\tiny\CS}
\newcommand{\singlespacingplus}{\let\CS=
\@currsize\renewcommand{\baselinestretch}{1.25}\tiny\CS}
\newcommand{\doublespacing}{\let\CS=
\@currsize\renewcommand{\baselinestretch}{1.75}\tiny\CS}
\newcommand{\extradoublespacing}{\let\CS=
\@currsize\renewcommand{\baselinestretch}{1.9}\tiny\CS}
\newcommand{\draftspacing}{\let\CS=
\@currsize\renewcommand{\baselinestretch}{2.0}\tiny\CS}
\newcommand{\hugedraftspacing}{\let\CS=
\@currsize\renewcommand{\baselinestretch}{2.4}\tiny\CS}
\makeatother%

\newcommand{\normalspacing}{\singlespacing}

\newcommand{\calc}{\ensuremath{{\cal C}}}
\newcommand{\cmp}{\ensuremath{{\text{\rm cmp}_{f,g}}}}
\newcommand{\cmph}{\ensuremath{{\text{\rm cmp}_{h,h}}}}
\newcommand{\powercompare}{\ensuremath{{\text{\rm PowerCompare}}}}
\newcommand{\compare}{\ensuremath{{\text{\rm Compare}}}}
\newcommand{\shapley}{\mathrm{SS}}
\newcommand{\banzhaf}{\mathrm{Banzhaf}}
\newcommand{\rawbanzhaf}{{\banzhaf^*}}
\newcommand{\rawshapley}{{\shapley^*}}
\newcommand{\success}{\ensuremath{{\text{\rm succ}}}}

\newcommand{\acc}{\ensuremath{{\text{\rm \#acc}}}}
\newcommand{\naturals}{\mathbb{N}}

\newcommand{\ph}{\ensuremath{{\rm PH}}}

\newcommand{\p}{{\rm P}}

\newcommand{\calS}{\ensuremath{{\cal S}}}
\newcommand{\np}{{\rm NP}}

\newcommand{\subsetsum}{{\rm SubsetSum}}
\newcommand{\xthreec}{{\rm X3C}}

\newcommand{\sharpp}{{\rm \#P}}

\newcommand{\pp}{\ensuremath{{\rm PP}}}

\newtheorem{theorem}{Theorem}[section]

\newtheorem{definition}[theorem]{Definition}

\newtheorem{lemma}[theorem]{Lemma}
\newtheorem{fact}[theorem]{Fact}
\newenvironment{proof}{\sproof}{\bigskip}

\newcommand{\sproof}{\noindent{\bf Proof.}\quad}

\newcommand{\qedblob}{\mbox{\rule[-1.5pt]{5pt}{10.5pt}}}
\renewcommand\qedblob{\mbox{\ding{113}}}
\def\literalqed{{\ \nolinebreak\hfill\mbox{\qedblob\quad}}}

\def\qed{\literalqed}

\newcommand{\selfnote}[1]{}

\bibliographystyle{alpha}

\sloppy

\begin{document}
\title{The Complexity of Power-Index Comparison\thanks{Also appears as URCS TR-2008-929.}}
\author{Piotr Faliszewski\thanks{ Work done in part while visiting
Heinrich-Heine-Universit\"at D\"usseldorf, Germany.
Supported in part by grant NSF-CCF-0426761.} \\
Department of Computer Science \\ University of Rochester \\ Rochester, NY 14627
\and 
Lane A. Hemaspaandra\thanks{Supported in part by grant NSF-CCF-0426761, 
a TransCoop grant, and a Friedrich Wilhelm Bessel Research Award.} \\
Department of Computer Science \\ University of Rochester \\ Rochester, NY 14627
}

\date{January 29, 2008}

\maketitle              %

\begin{abstract}
  We study the complexity of the following problem: Given two weighted
  voting games $G'$ and $G''$ that each contain a player $p$, in which
  of these 
games
is $p$'s power index 
value
higher? We study this problem
  with respect to both the Shapley-Shubik
power index~\cite{sha-shu:j:powerindex} 
and
  the Banzhaf power index~\cite{ban:j:banzhaf,dub-sha:b:polsci:banzhaf}. 
Our main result is that
  for both of these power indices the problem is complete for 
  probabilistic polynomial time (i.e., is $\pp$-complete).  We
  apply our results to partially resolve some recently proposed 
  problems regarding the complexity of weighted voting games.
  We also study the complexity of the raw Shapley-Shubik
  power index.  
  Deng and
  Papadimitriou~\cite{den-pap:j:solution-concepts} showed that
  the raw Shapley-Shubik power index is
  $\sharpp$-metric-complete.  We strengthen this
  by showing that the raw Shapley-Shubik power index is
  many-one complete for $\sharpp$.
  And our strengthening cannot possibly 
  be further improved to parsimonious completeness, since we observe that,
  in contrast with the raw Banzhaf power index, the raw Shapley-Shubik
  power index is not $\sharpp$-parsimonious-complete.
\end{abstract}

\normalspacing

\section{Introduction}
In an
abstract, direct
democracy, each member in a certain sense 
has equal potential for impact on the decisions that
the society makes. However, in many practical decision-making
scenarios it is reasonable to give up this noble idea and consider
weighted voting instead. 
Here are a few motivating examples.  In a country divided into
districts it makes sense to give each district voting power
proportional to its population (consider, e.g., 
the US House of Representatives
or various decision making processes within the European Union).
In fact, the power that various apportionment methods give 
to the US states in its House of Representatives has 
been studied in terms of how well it is proportional 
to the sizes of the states~\cite{hem-raj-set-zim:j:power-indices}.
In a business
setting, stockholders in a company might hope to have voting power
proportional to the amount of stock they own.
Within computer science,
Dwork et al.~\cite{dwo-kum-nao-siv:c:rank-aggregation} suggested
building a meta search engine for the web via treating other search
engines as voters in an election.  It would only be natural to weigh
the participating search engines with their (quantified in some way)
quality. Naturally, one can provide many other examples.

The focus of this paper is on the computational complexity of the
following issue: Given an individual and two weighted voting scenarios
(in each of them our individual might have different weight and each
scenario might involve different sets of voters with different
weights), in which one of them is our individual more influential? (We
provide a formal definition of this problem in Section~\ref{sect:prelim:votinggames}.)
This problem has a very natural motivation. For example, consider a
company that wishes to join some business consortium and has a choice
among several consortia (e.g., consider an airline deciding which airline
alliance to join). It is natural to assume that within each consortium
companies make decisions via weighted voting, with companies weighted,
e.g., via their size or revenue or some combination thereof. In a
political context, members of the European Union sometimes try to promote
new schemes of distributing vote weights among EU members.  It is
important for the countries involved to see which scheme is better for
them. One can easily give many other applications of the issue we
study.

Formally, we model the above problem via comparing the values of power
index functions---in our case those of Shapley and
Shubik~\cite{sha-shu:j:powerindex} and of 
Banzhaf (\cite{ban:j:banzhaf}, see also~\cite{dub-sha:b:polsci:banzhaf})---of 
a particular player within two given weighted voting games.  Our
main result is that this problem is $\pp$-complete for
both the Shapley-Shubik power index
and the Banzhaf power index.  Let us now define our problem formally.

\subsection{The Power-Index Comparison Problem}
\label{sect:prelim:votinggames}

We model weighted voting via so-called weighted voting games.  An
$n$-player weighted voting game is a sequence of $n$ nonnegative
integer weights, $w_1, \ldots, w_n$, together with a quota $q$. 
We denote it as $(w_1, \ldots, w_n; q)$. We refer to the player with
weight $w_i$ as the $i$'th player.  Weighted voting games model the
following scenario: The players are given a yes/no question (e.g.,
should we lower the taxes? should we buy out our competitors?) and
each player either agrees (answers \emph{yes}) or disagrees (answers
\emph{no}). If the total weight of the voters who agree is at least as
high as the quota then the result of the game is \emph{yes} and
otherwise it is \emph{no}.

Let $G$ be a voting game $(w_1, \ldots, w_n; q)$. Any subset of $\{1,
\ldots, n\}$ is a coalition in $G$. We say that a coalition $S$ is
successful if $\sum_{i\in S} w_i \geq q$. We define $\success_G(S)$ to
be $1$ if $S$ is a successful coalition for $G$ and to be $0$ otherwise.

Interestingly, the relation between the effective power of a player
within a voting game and his or her weight is not as simple as one
might think. Consider game $G = (8,7,2; 9)$, i.e., a game with quota
$q = 9$ and three players with weights $8$, $7$, and $2$,
respectively. It is easy to see that in this game any coalition of at
least two players is successful. In effect, each of the players
can influence the final result of the game to exactly the same degree,
regardless of the fact that their weights 
differ significantly.
Thus when analyzing weighted voting games it is standard to measure
players' power using, e.g., the Shapley-Shubik 
power index~\cite{sha-shu:j:powerindex}
or the Banzhaf power index~\cite{ban:j:banzhaf,dub-sha:b:polsci:banzhaf}.

In essence, these power indices measure the probability that, assuming
some coalition formation model, our designated player is critical for
the forming coalition. By \emph{critical} we mean here that the
coalition is successful with our designated player but is not
successful without him or her.

Let $G = (w_1, \ldots, w_n; q)$ be a voting game, let $i$ be a player
in this game, and let $N = \{1, \ldots, n\}$ be the set of all players
of $G$. The value of the Banzhaf power index of $i$ in $G$
is defined as $\banzhaf(G,i) =
\frac{\rawbanzhaf(G,i)}{2^{n-1}}$, where $\rawbanzhaf(G,i)$ is the
\emph{raw} version of the index,
\[
\rawbanzhaf(G,i) = \sum_{S \subseteq N - \{i\}}(\success_G(S \cup \{i\}) - \success_G(S)).\]
The Shapley-Shubik power index of player $i$ in game $G$
is defined as $\shapley(G,i) = \frac{\rawshapley(G,i)}{n!}$, where
$\rawshapley(G,i)$ is the raw version of the index,
\[\rawshapley(G,i) = \sum_{S \subseteq N - \{i\}}
\|S\|!(n-\|S\|-1)!(\success_G(S \cup \{i\}) - \success_G(S)).\]

Intuitively, $\banzhaf(G,i)$ gives the probability that a randomly
chosen coalition of players in $N - \{i\}$ is not successful but would
become successful had player $i$ joined in.  The intuition for
the Shapley-Shubik index is that we count the proportion of permutations for
which a given player is pivotal. Given a permutation $\pi$ of $\{1,
\ldots, n\}$, the $\pi(i)$'th player is pivotal if it holds that the coalition
$\{\pi(1), \pi(2), \ldots, \pi(i)\}$ is successful and the coalition
$\{\pi(1), \pi(2), \ldots, \pi(i-1)\}$ is not. This permutation-based intuition
is motivated by the view of the successful-coalition formation as the
process of players joining in in random order. Naturally, the first
player that makes the coalition successful is crucial and so the idea
is to measure power via counting how often our player-of-interest
   is pivotal.

The focus of this paper is on the computational complexity analysis of
the following problem.

\begin{definition}\label{def:powercompare}
Let $f$ be either the Shapley-Shubik or the Banzhaf power index.  By
$\powercompare_f$ we mean the problem where the input $(G',G'',i)$
contains two weighted
voting games, $G' = (w'_1, \ldots, w'_n, q')$ and $G'' = (w''_1,
\ldots, w''_n, q'')$, and an integer $i$, $1 \leq i \leq n$, and where
we ask whether $f(G',i) > f(G'',i)$.
\end{definition}

Note that in the above definition we assume that both games have the
same number of players. At first this might seem to be a weakness 
but it is easy to see that given two games 
with different numbers of players
we can easily pad the smaller one with
weight-$0$ players.  On the other hand, the assumption that both games
have the same number of players allows us to solve the problem via
comparing the raw values of the index: The scaling factor for both
games is the same and thus it does not affect the result of the
comparison.

\subsection{Computational Complexity}
We briefly review some 
notions and notations.

We fix the alphabet $\Sigma = \{0,1\}$, and we assume that all the
problems we consider are encoded in a natural, efficient manner over $\Sigma$.  By
$|\cdot|$ we mean the length function. We assume $\langle\cdot, \cdot
\rangle$ to be a standard, natural pairing function such that
$|\langle x, y \rangle| = 2(|x| + |y|)+2$.

The main result of this paper, Theorem~\ref{thm:main},
says that
the power-index comparison problem
is $\pp$-complete both for
the Shapley-Shubik 
power index and for the Banzhaf power index.
The class
$\pp$, probabilistic polynomial time, 
was defined by Simon~\cite{sim:thesis:complexity} and
Gill~\cite{gil:j:prob-tms}.  A language $L \subseteq \Sigma^*$ belongs
to $\pp$ if and only if there exists a polynomial $p$ and a
polynomial-time computable relation $R$ such that
$x \in L \iff \| \{ w \in \Sigma^{p(|x|)} \mid R(x,w)\ \mbox{holds} \}   \| > 2^{p(|x|)-1}$.
PP captures the set of languages having a probabilistic Turing machine 
that on precisely the elements of the set has strictly more than 50\% probability
of acceptance.  
Let us mention that $\pp$ is a very powerful class.  For example, it
is well-known that $\np$ 
is a subset of $\pp$
(as are even various larger classes). 
Via Toda's
Theorem~\cite{tod:j:pp-ph}, we know that $\ph \subseteq \p^\pp$. That
is, $\pp$ is at least as powerful as polynomial-time
hierarchy, give or take the flexibility of 
polynomial-time Turing reductions.
Many other properties of $\pp$ have been established in the
literature.

Let us now recall the definition of the class
$\sharpp$~\cite{val:j:permanent}.  For each $\np$ machine $N$ (i.e.,
for each nondeterministic polynomial-time machine $N$), by $\acc_N(x)$
we mean the number of accepting computation paths of $N$ running with
input $x$.  A function $f$, $f: \Sigma^* \rightarrow \naturals$,
belongs to $\sharpp$ if and only if there is an $\np$ machine $N$ such
that $(\forall x \in \Sigma^*)[f(x) = \acc_N(x)]$.  $\sharpp$ is, in
a very loose sense, a functional counterpart of $\pp$. For example, $\p^\sharpp
= \p^\pp$~\cite{bal-boo-sch:j:sparse}.  More typically, $\sharpp$
is described as the counting analogue of $\np$.

As is usual, we say that a language $L$ is hard for a complexity
class $\calc$ if  every language in $\calc$
polynomial-time many-one reduces to $L$. If in addition $L$ belongs
to $\calc$ then we say that $L$ is $\calc$-complete.
A language $A$
polynomial-time many-one reduces to a language $B$ if there exists a
polynomial-time computable function $f$ such that for each string $x
\in \Sigma^*$ it holds that $x \in A \iff f(x) \in B$.  On the other
hand, there is no one agreed-upon notion of completeness for function
classes. For example, Valiant~\cite{val:j:permanent} in his seminal
paper used Turing reductions but other people have preferred notions
such as Krentel's metric reductions~\cite{kre:j:optimization},
Zank{\'{o}}'s many-one reductions (for functions)~\cite{zan:j:sharp-p}, and
Simon's~\cite{sim:thesis:complexity} parsimonious reductions.

In the context of power index functions, Prasad and Kelly~\cite{pra-kel:j:voting}
(implicitly) showed that
the (raw) Banzhaf power index is 
$\sharpp$-parsimonious-complete
and Deng and
Papadimitriou~\cite{den-pap:j:solution-concepts} 
established that the (raw) Shapley-Shubik power index is
$\sharpp$-metric-complete
(regarding the complexity analysis of power
indices, we also mention the paper of Matsui and
Matsui~\cite{mat-mat:j:power-index-complexity}). We now review
parsimonious and metric reductions, as those 
underpin the notions of 
parsimonious-completeness and metric-completeness.

\begin{definition}
\begin{enumerate}
\item \cite{kre:j:optimization}
  A function $f: \Sigma^* \rightarrow \naturals$ metric
  reduces to a function $g: \Sigma^* \rightarrow \naturals$ if there
  exist two polynomial-time computable functions, $\varphi$ and
  $\psi$, such that $(\forall x \in \Sigma^*)[f(x) =
  \psi(x,g(\varphi(x)))].$  

\item \cite{zan:j:sharp-p}
  A function $f: \Sigma^* \rightarrow \naturals$ many-one
  reduces to a function $g: \Sigma^* \rightarrow \naturals$ if there
  exists two  polynomial-time computable functions, $\varphi$ and
  $\psi$, such that $(\forall x \in \Sigma^*)[f(x) =
  \psi(g(\varphi(x)))].$\footnote{Note that Zank{\'{o}}'s 
many-one reduction is a analogue 
for functions of the standard many-one reduction notion 
for sets.  
To avoid confusion, we mention to the reader 
that 
the term 
``functional many-one reduction'' (which we do not use here) 
is sometimes 
used in the literature~\cite{vol:c:function} 
as a synonym 
for ``parsimonious reductions.''
}
\item \cite{sim:thesis:complexity}
  $f$ parsimoniously reduces to $g$ if there is a
  polynomial-time computable function $\varphi$ such that $(\forall x
  \in \Sigma^*)[f(x) = g(\varphi(x))]$.
\end{enumerate}
\end{definition}

Note that (a)~if $f$ parsimoniously reduces to $g$, then 
$f$ 
many-one reduces to $g$, 
and (b)~if
$f$ 
many-one reduces to $g$, 
then $f$ metric reduces to $g$.  Given a
function class $\calc$, we say that a function $f$ is
$\calc$-parsimonious-complete if $f \in \calc$ and each function in $\calc$
parsimonious reduces to $f$. $\calc$-metric-completeness and
$\calc$-many-one-completeness are defined analogously. Typically,
parsimonious-complete functions are easier to work with than functions
that are merely metric-complete or many-one-complete. In particular,
our proof of Theorem~\ref{thm:shapley} is more involved than our proof
of Theorem~\ref{thm:banzhaf} because, as we note, the raw
Shapley-Shubik power index is not parsimoniously complete.

\section{Main Results}

Our main result, Theorem~\ref{thm:main}, says that the power index
comparison problem is $\pp$-complete. This section is devoted to
building the infrastructure for Theorem~\ref{thm:main}'s proof and
giving that proof. We also show that the raw Shapley-Shubik power
index is $\sharpp$-many-one-complete but not
$\sharpp$-parsimonious-complete.

\begin{theorem}
  \label{thm:main}
  Let $f$ be either the Banzhaf or the Shapley-Shubik power index. The problem
  $\powercompare_f$ is $\pp$-complete.
\end{theorem}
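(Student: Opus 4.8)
The plan is to establish $\pp$-completeness of $\powercompare_f$ for each choice of $f$ by handling membership and hardness separately, with the work split into the Banzhaf case and the (harder) Shapley--Shubik case. For \emph{membership}, I would observe that the raw power index $\rawbanzhaf(G,i)$ (resp.\ $\rawshapley(G,i)$) is in $\sharpp$ --- in the Banzhaf case this is immediate, since a nondeterministic machine can guess $S \subseteq N-\{i\}$ and accept on the $\success_G(S\cup\{i\})-\success_G(S)=1$ paths; in the Shapley--Shubik case one guesses a permutation $\pi$ and accepts iff player $i$ is pivotal, so $\rawshapley$ counts pivotal permutations. Since both games in the input have the same number $n$ of players, the common scaling factor ($2^{n-1}$ or $n!$) cancels, so $f(G',i) > f(G'',i) \iff \text{raw}_f(G',i) > \text{raw}_f(G'',i)$. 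Comparing two $\sharpp$ functions ``is the first strictly larger than the second'' is a $\pp$-type predicate: form the difference $\text{raw}_f(G',i) - \text{raw}_f(G'',i)$ as a $\gapp$ function and ask whether it is positive, which is exactly a $\pp$ question (alternatively, use the standard fact that $\{(x,y) : f(x) > g(y)\} \in \pp$ for $f,g \in \sharpp$). This gives $\powercompare_f \in \pp$.

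For \emph{hardness}, I would reduce a canonical $\pp$-complete problem to $\powercompare_f$. The cleanest target is the ``majority-style'' problem: given an $\np$ machine $N$ and input $x$, decide whether $\acc_N(x) > 2^{p(|x|)-1}$, i.e.\ whether more than half of the length-$p(|x|)$ paths accept; this is $\pp$-complete essentially by definition. Equivalently, I can reduce from the language $\{(N_1,x_1,N_2,x_2) : \acc_{N_1}(x_1) > \acc_{N_2}(x_2)\}$, which is $\pp$-complete. The idea is: use the known $\sharpp$-completeness results to realize each count as a raw power index value. By Prasad--Kelly, the raw Banzhaf index is $\sharpp$-parsimonious-complete, so for the Banzhaf case I can parsimoniously encode $\acc_{N_1}(x_1)$ as $\rawbanzhaf(G',i)$ and $\acc_{N_2}(x_2)$ as $\rawbanzhaf(G'',i)$ for a common player $i$ and common player count (padding with weight-$0$ players as noted after Definition~\ref{def:powercompare}), and then $(N_1,x_1,N_2,x_2)$ maps to $(G',G'',i)$. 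For Shapley--Shubik, parsimonious completeness fails (as the paper notes), so I must instead use the $\sharpp$-many-one-completeness of $\rawshapley$ established later in this section: a many-one reduction gives $\acc_{N_j}(x_j) = \psi_j(\rawshapley(G^{(j)},i))$ for polynomial-time $\psi_j$, and I would need the output-transforming functions $\psi_j$ to be order-compatible so that comparing the raw indices still decides the comparison of the counts --- this may require choosing the reduction so that $\psi$ is monotone, or composing with a further padding/gadget step that makes the post-processing trivial on the relevant instances.

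The \textbf{main obstacle} is exactly this last point in the Shapley--Shubik case: because the reduction to $\rawshapley$ is only many-one and not parsimonious, the raw Shapley--Shubik values are not literally the path counts but rather $\psi$-preimages of them, and ``$a>b$'' need not be preserved under applying $\psi$ to both sides. I expect the fix to be either (i) to prove a strengthened version of the $\sharpp$-many-one completeness of $\rawshapley$ in which the output decoder $\psi$ is order-preserving (e.g.\ an affine function $\psi(v) = (v - c)/d$ with positive $d$, uniform across the construction), or (ii) to build a dedicated gadget reduction directly from a $\pp$-complete comparison problem into two weighted voting games, controlling the combinatorial factors $\|S\|!(n-\|S\|-1)!$ in $\rawshapley$ so that the pivotal-permutation counts of $i$ in $G'$ and $G''$ stand in the desired order precisely when the original instance is in the $\pp$-complete language. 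Getting these symmetric-group weighting factors under control --- so that a ``yes'' instance produces a strictly larger raw Shapley--Shubik value and a ``no'' instance a not-larger one --- is the delicate combinatorial heart of the argument, and is presumably why the excerpt flags Theorem~\ref{thm:shapley} as more involved than Theorem~\ref{thm:banzhaf}.
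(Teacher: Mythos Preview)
Your outline is correct and matches the paper's approach almost exactly: membership via the ``compare two $\sharpp$ values'' argument (the paper's Lemma~\ref{thm:pp-member}), Banzhaf hardness via parsimonious completeness of $\rawbanzhaf$ (Lemma~\ref{thm:pp-com} and Theorem~\ref{thm:banzhaf}), and the correct diagnosis that Shapley--Shubik needs extra work because $\rawshapley$ is not parsimonious-complete.

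The only place you stop short is the Shapley--Shubik hardness, and your option~(i) is precisely what the paper does; here is the concrete mechanism you are missing.  Reduce both $\sharpp$ queries parsimoniously to $\#\xthreec$ instances, then apply simple solution-count-preserving padding transformations (Fact~\ref{fact:x3c-transform}, Lemmas~\ref{thm:two-thirds} and~\ref{thm:x3c-transform}) so that the two $\xthreec$ instances end up with the \emph{same} size parameters $\|B\|=3k$ and $\|\calS\|=r$.  Pushing each normalized instance through the standard $\#\xthreec \to \#\subsetsum$ reduction and the Deng--Papadimitriou observation yields games $G',G''$ with the same number of players in which $\rawshapley(G',1)$ and $\rawshapley(G'',1)$ equal $(r-k)!\,k!$ times the respective $\#\xthreec$ counts.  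Because the factor $(r-k)!\,k!$ is \emph{identical on both sides}, comparing the raw Shapley--Shubik values is equivalent to comparing the original counts---your hoped-for $\psi$ is literally ``divide by $(r-k)!\,k!$,'' affine, order-preserving, and uniform.  This is the content of Lemma~\ref{thm:compare-via-ss} and Theorem~\ref{thm:shapley}; the abstract many-one-completeness of $\rawshapley$ (Theorem~\ref{thm:many-one}) is not used for the $\pp$-hardness reduction itself, only the specific structure of the reduction that yields that common multiplicative factor.
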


We start via showing $\pp$-membership of a problem closely related to
our $\powercompare_\banzhaf$ and $\powercompare_\shapley$ problems.  
Let $f$ be
a $\sharpp$ function and let $\compare_f$ be the language $\{ \langle
x, y \rangle \mid x,y \in \Sigma^* \land f(x) > f(y) \}$.  
($\powercompare_\banzhaf$ and $\powercompare_\shapley$ are
essentially, up to a minor definitional issue, incarnations of
$\compare_f$ for appropriate functions $f$.)

\begin{lemma}\label{thm:pp-member}
  Let $f$ be a $\sharpp$ function. The language
  $\compare_f$ %
  is in $\pp$.
\end{lemma}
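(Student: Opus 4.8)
The plan is to exploit the well-known characterization that $\pp$ is closed under the kind of "difference of two counting functions with a threshold" operations that $\compare_f$ naturally encodes. Concretely, since $f \in \sharpp$, fix an $\np$ machine $N$ with $f(z) = \acc_N(z)$ for all $z$, and let $p$ be a polynomial bounding the running time (hence the path-count exponent) of $N$, so that on input $z$ the machine $N$ has exactly $2^{p(|z|)}$ computation paths. On input $\langle x, y \rangle$ I want a polynomial-time relation $R$ and polynomial $p'$ such that $f(x) > f(y)$ iff the number of accepting witnesses exceeds $2^{p'-1}$. The standard trick: design a machine $M$ that, given $\langle x, y \rangle$, nondeterministically does one of the following two "sides." On the first side it simulates $N(x)$ on its $2^{p(|x|)}$ paths and accepts exactly on $N(x)$'s accepting paths; on the second side it simulates $N(y)$ on its $2^{p(|y|)}$ paths and accepts exactly on $N(y)$'s \emph{rejecting} paths. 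To make the two sides balanced we first pad both computations to a common path-exponent $m = \max(p(|x|), p(|y|))$ by appending dummy coin flips (this only multiplies each side's count by a common power of two and does not change which side has more accepting paths relative to half). Then on input $x$ (padded) the first side has $2^m \cdot f(x)/2^{p(|x|)} \cdot (\text{appropriate scaling})$ accepting paths — I will set this up so the first side contributes exactly $f(x) \cdot 2^{m - p(|x|)}$ accepting paths and the second side contributes $(2^{p(|y|)} - f(y)) \cdot 2^{m - p(|y|)} = 2^m - f(y)\cdot 2^{m-p(|y|)}$ accepting paths, and these use disjoint path-blocks under one extra initial coin flip.

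The key computation is then: with one extra leading coin flip selecting the side, the total number of paths of $M$ on $\langle x,y\rangle$ is $2^{m+1}$, and the total number of accepting paths is $c_x f(x) + \bigl(2^m - c_y f(y)\bigr)$ where $c_x = 2^{m-p(|x|)}$ and $c_y = 2^{m - p(|y|)}$. If I instead arrange the padding so that the two counting functions are put on a common scale — i.e., replace $f(x)$ by $f(x)$ itself and $f(y)$ by $f(y)$ with both games already normalized to the same path count, which in our intended application ($\powercompare$) is exactly the case since both games have the same number of players — the bookkeeping simplifies to: accepting paths $= f(x) + (2^m - f(y))$ out of $2^{m+1}$ total, so the count exceeds $2^m = \tfrac{1}{2} 2^{m+1}$ iff $f(x) > f(y)$. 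That is precisely the $\pp$ acceptance condition with threshold exactly half, and $R$ (checking a path of $M$) is polynomial-time, so $\compare_f \in \pp$.

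The main obstacle is handling the general $\sharpp$ case cleanly, where $f$'s underlying machine on input $x$ and on input $y$ may have different path-exponents $p(|x|) \neq p(|y|)$: one must scale both counts to a common denominator $2^m$ by multiplying by appropriate powers of two (implemented by free coin flips), and verify that this scaling preserves the strict inequality $f(x) > f(y) \iff c_x f(x) > c_y f(y)$ — which holds because $c_x f(x) = 2^{m-p(|x|)} f(x)$ and $c_y f(y) = 2^{m-p(|y|)} f(y)$, but these are \emph{not} on a common scale unless one is more careful. The correct fix is the familiar one: replace $f$ by the function $f'(z) = 2^{q - p(|z|)} f(z)$ for a polynomial $q$ with $q(n) \ge p(n)$ for all relevant lengths (so $f' \in \sharpp$ too, via padding $N$ with $q - p$ dummy flips), obtaining a machine whose path-exponent depends only on $|z|$ through $q$; then on $\langle x,y\rangle$ one has $p'(|\langle x,y\rangle|) := q(|x|) = q(|y|)$ after an outer padding to equalize even these, and the two sides genuinely share a common total, making the threshold land exactly at one half. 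Once this normalization is in place the rest — building $M$, reading off $R$ and $p'$, and invoking the definition of $\pp$ — is routine.
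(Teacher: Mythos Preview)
Your overall architecture is the same as the paper's: flip a leading coin to choose a side, simulate $N(x)$ on one side and the complement of $N(y)$ on the other, so that the total accepting count is $f(x)+(\text{half})-f(y)$, which exceeds half iff $f(x)>f(y)$. The gap is in your handling of the padding. You pad each side to a common exponent $m=\max(p(|x|),p(|y|))$ by \emph{free coin flips} and initially claim this ``only multiplies each side's count by a common power of two.'' That is false: the $x$-side is multiplied by $2^{m-p(|x|)}$ and the $y$-side by $2^{m-p(|y|)}$, and these differ whenever $p(|x|)\neq p(|y|)$. You notice this yourself a few lines later, but your proposed repair---replacing $f$ by $f'(z)=2^{q(|z|)-p(|z|)}f(z)$ and then doing further ``outer padding''---does not work. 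First, $\compare_{f'}$ is not the same language as $\compare_f$: take $f(x)=1$, $f(y)=2$ with $|x|$ much smaller than $|y|$, and $f'(x)>f'(y)$ while $f(x)<f(y)$. Second, your assertion $q(|x|)=q(|y|)$ is simply false when $|x|\neq|y|$, and any additional free-flip padding to equalize them again scales the two sides by different factors. So the argument, as written, never actually establishes the equivalence $f(x)>f(y)\iff(\text{majority accept})$.

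The paper's resolution is a one-line change to your construction: instead of free coin flips, require the extra witness bits to be \emph{all zero}. Concretely, set $n=|\langle x,y\rangle|$ and take the common witness length to be $q(n)$, which dominates both $q(|x|)$ and $q(|y|)$. On the $x$-side accept iff $w_1\cdots w_{q(|x|)}$ is an accepting path of $N(x)$ \emph{and} $w_{q(|x|)+1}=\cdots=w_{q(n)}=0$; on the $y$-side do the analogous thing, complemented. This embeds the raw counts $f(x)$ and $f(y)$ unchanged into the common witness space, so the total accepting count is exactly $f(x)+\bigl(2^{q(n)}-f(y)\bigr)$ out of $2^{q(n)+1}$, and the strict-majority threshold lands precisely at $f(x)>f(y)$.
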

\begin{proof}
Let $f$ be an arbitrary $\sharpp$ function and let $N$ be an
$\np$ machine such that $f = \acc_N$.  Without the loss of generality,
we assume that 
 there is a polynomial $q$ such that for each input $x
\in \Sigma^*$ all computation paths of $N$ make exactly
$q(|x|)$ binary nondeterministic choices. 
Thus each computation path of $N$ on input $x$
can be represented as a string $w$ in $\Sigma^{q(|x|)}$.

In order to show that $\compare_f$ is in $\pp$ we need to provide a
polynomial-time computable relation $R$ and a polynomial $p$ such that
for each string $z = \langle x, y \rangle$ it holds that:
$z \in \compare_f \iff \| \{ w \in \Sigma^{p(|z|)} \mid R(z,w)\
\mbox{holds} \} \| > 2^{p(|z|)-1}$.
We now define such $R$ and $p$. Let us fix two strings, $x$ and $y$,
and let $z = \langle x,y\rangle$ and $n = |z|$. We define $p(n) =
q(n)+1$ and, for each string $w = w_0w_1\ldots w_{p(n)-1} \in
\Sigma^{p(n)}$, we define $R(z,w)$ as follows:
\begin{description}
\item [Case 1.] If $w_0 = 0$ then $R(z, w)$ is \emph{true} exactly if
  the string $w_1, \ldots, w_{q(|x|)}$ denotes an accepting computation
  path of $N$ on $x$
  and the symbols $w_{q(|x|)+1}$ through $w_{p(n)-1}$ are all $0$. $R(z, w)$ is false otherwise.

\item [Case 2.] If $w_0 = 1$ then $R(z, w)$ is \emph{false} exactly if
  the string $w_1, \ldots, w_{q(|y|)}$ denotes an accepting computation
  path of $N$ on $y$
  and the symbols $w_{q(|x|)+1}$ through $w_{p(n)-1}$ are all $0$. $R(z, w)$ is true otherwise.
\end{description}
Via analyzing the above two cases it is easy to see that there are
exactly $f(x) + (2^{p(n)-1} - f(y)) = f(x) - f(y) +2^{p(n)-1}$
strings $w \in \Sigma^{p(n)}$ for which $R(z, w)$
is true.  This value is greater than $2^{p(n)-1}$ if and only if
$f(x) > f(y)$.
Thus the relation $R$ and the polynomial $p$ jointly
witness that $\compare_f$ belongs to $\pp$.~\qed
\end{proof}

Lemma~\ref{thm:pp-member} gives an upper bound on the complexity of
$\compare_f$ (assuming that $f \in \sharpp$). We now prove a
matching lower bound, $\pp$-completeness, for the case that $f$ is
$\sharpp$-parsimonious-complete.

\begin{lemma}\label{thm:pp-com}
  Let $f$ be a $\sharpp$-parsimonious-complete function. The language
  $\compare_f$ is $\pp$-complete.
\end{lemma}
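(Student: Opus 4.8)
The plan is to prove the two directions of $\pp$-completeness separately. Membership of $\compare_f$ in $\pp$ is immediate from Lemma~\ref{thm:pp-member}, since any $\sharpp$-parsimonious-complete function is in particular a $\sharpp$ function. So the work is entirely in establishing $\pp$-hardness: given an arbitrary language $L \in \pp$, I must produce a polynomial-time many-one reduction from $L$ to $\compare_f$.

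\smallskip

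First I would unpack the definition of $\pp$: fix $L \in \pp$, with witnessing polynomial $p$ and polynomial-time relation $R$, so that $x \in L \iff \|\{w \in \Sigma^{p(|x|)} : R(x,w)\}\| > 2^{p(|x|)-1}$. The natural idea is to define a function $g$ that, on input $x$, counts $\|\{w \in \Sigma^{p(|x|)} : R(x,w)\}\|$ via an obvious $\np$ machine; then $g \in \sharpp$, and $x \in L \iff g(x) > 2^{p(|x|)-1}$. The right-hand side is a comparison of $g(x)$ against a fixed, easily-computed threshold $T(x) = 2^{p(|x|)-1}$, which I want to phrase as an instance of $\compare_f$, i.e., as a comparison $f(a) > f(b)$. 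Since $2^{p(|x|)-1}$ is a power of two and it is trivial to build an $\np$ machine with exactly that many accepting paths, there is a $\sharpp$ function $h$ with $h(x) = 2^{p(|x|)-1}$. Thus $x \in L \iff g(x) > h(x)$, which is $\langle x_g, x_h\rangle \in \compare_g{-}\text{vs}{-}h$ — but that is not yet $\compare_f$ for our fixed target $f$.

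\smallskip

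The bridge is the parsimonious-completeness hypothesis on $f$: both $g$ and $h$ are $\sharpp$ functions, so each parsimoniously reduces to $f$. Let $\varphi_g$ and $\varphi_h$ be the respective parsimonious reduction functions, so $g(x) = f(\varphi_g(x))$ and $h(x) = f(\varphi_h(x))$ for all $x$. Then define the reduction $\rho(x) = \langle \varphi_g(x), \varphi_h(x)\rangle$. This $\rho$ is polynomial-time computable (composition of polynomial-time functions and the pairing function), and $x \in L \iff g(x) > h(x) \iff f(\varphi_g(x)) > f(\varphi_h(x)) \iff \rho(x) \in \compare_f$. Hence $L \le^p_m \compare_f$, and since $L$ was arbitrary in $\pp$, $\compare_f$ is $\pp$-hard; combined with membership, it is $\pp$-complete.

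\smallskip

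I expect the only real subtlety — not an obstacle so much as a point requiring care — to be the reformulation of the ``$> 2^{p(|x|)-1}$'' threshold condition as a genuine function comparison, i.e., making sure the padding function $h$ with $h(x) = 2^{p(|x|)-1}$ is itself in $\sharpp$ so that the parsimonious-completeness of $f$ applies to it (it is: an $\np$ machine that on input $x$ computes $p(|x|)$, then branches to produce exactly $2^{p(|x|)-1}$ accepting paths). Everything else is routine closure under composition. It is worth noting explicitly that this argument uses parsimoniousness in an essential way — a mere metric or Zank{\'o}-many-one reduction from $g$ and $h$ to $f$ would not obviously preserve the ordering $g(x) > h(x)$, which is exactly why the weaker completeness notion known for the raw Shapley-Shubik index forces the more elaborate separate treatment in Theorem~\ref{thm:shapley}.
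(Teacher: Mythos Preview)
Your proof is correct and follows essentially the same argument as the paper: define the two $\sharpp$ functions $g_1(x)=\|\{w:R(x,w)\}\|$ and $g_2(x)=2^{p(|x|)-1}$, use parsimonious reductions $\varphi_1,\varphi_2$ to $f$, and map $x$ to $\langle\varphi_1(x),\varphi_2(x)\rangle$. Your added remark on why parsimoniousness (rather than metric or many-one completeness) is needed to preserve the ordering is a helpful gloss but otherwise the two proofs coincide.
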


\begin{proof}
  Let $f$ be a $\sharpp$-parsimonious-complete function. Via
  Lemma~\ref{thm:pp-member} we know that $\compare_f$ is in $\pp$ and
  thus to show $\pp$-completeness it remains to show $\pp$-hardness.
  We do so via reducing an arbitrary $\pp$ language $L$ to $\compare_f$.

  Let $L$ be an arbitrary $\pp$ language.  By definition, there exists
  a polynomial-time relation $R$ and a polynomial $p$ such that for
  each string $x \in \Sigma^*$ it holds that $x \in L \iff \| \{ y \in
  \Sigma^{p(|x|)} \mid R(x,y)\ \mbox{holds} \} \| > 2^{p(|x|)-1}.$ 
  We
  define two functions, $g_1$ and $g_2$, such that $g_1(x) = \| \{ y
  \in \Sigma^{p(|x|)} \mid R(x,y)\ \mbox{holds} \} \|$ and $g_2(x) =
  2^{p(|x|)-1}$. It is easy to see that both $g_1$ and $g_2$ are in
  $\sharpp$.  $g_1$ can be computed via a an $\np$ machine that on input
  $x$ guesses a binary string $y$ of length $p(|x|)$ and accepts if
  and only if $R(x,y)$ holds. $g_2$ can be computed via a machine that
  on input $x$ guesses a binary string of length $2^{p(|x|)-1}$ and
  then accepts. Naturally, $x \in L$ if and only if $g_1(x) > g_2(x)$.

  Since $f$ is $\sharpp$-parsimonious-complete, both $g_1$ and $g_2$
  parsimoniously reduce to $f$. Let $\varphi_1$ be the reduction function
  for $g_1$ and let $\varphi_2$ be the reduction function for $g_2$.
  We have that for each string $x$ it holds that $g_1(x) =
  f(\varphi_1(x))$ and $g_2(x) = f(\varphi_2(x))$.

  Our reduction from $L$ to $\compare_f$ works as follows. On input
  $x$ we output the string $z = \langle \varphi_1(x),
  \varphi_2(x) \rangle$. Clearly, this can be done in polynomial time. 
  To show correctness it is enough to recall that $x \in L$ if and
  only if $g_1(x) > g_2(x)$, which is equivalent to testing whether $z$ is
  in $\compare_f$.
  Since $L$ was chosen as an arbitrary $\pp$ language, this proves $\pp$-completeness.~\qed
\end{proof}

We are almost ready to show that $\powercompare_\banzhaf$ is
$\pp$-complete. However, in order to do so, we need to justify 
the claim that
the raw version of the Banzhaf power index is
$\sharpp$-parsimonious-complete. (This was shown implicitly in the
work of Prasad and Kelly~\cite{pra-kel:j:voting}, but we feel that it
is important to explicitly outline the proof.)

One of our important tools here (and later on) is the function
$\#\xthreec$.  The input to the $\xthreec$ problem is a set $B = \{b_1,
\ldots, b_{3k}\}$ and a family ${\mathcal S} = \{S_1, \ldots,
S_n\}$ of 3-element subsets of $B$. The $\xthreec$ problem asks whether
there exists a collection of exactly $k$ sets in ${\mathcal S}$ whose
union is $B$. $\#\xthreec(B,{\mathcal S})$ 
is the number of solutions of the $\xthreec$ instance $(B,{\mathcal S})$.

Hunt et al.~\cite{hun-mar-rad-ste-:j:planar-counting-problems} showed
that $\#\xthreec$ is parsimonious complete for $\sharpp$. This is very
useful for us as the standard reduction from $\#\xthreec$ to
$\#\subsetsum$ (see, e.g., \cite[Theorem~9.10]{pap:b:complexity};
$\#\subsetsum$ is the function that accepts as input a
vector of nonnegative integers $(s_1, \ldots, s_n; q)$
and returns
the number of subsets of $\{s_1, \ldots, s_n\}$ that sum up to $q$) is
parsimonious and Prasad and Kelly's reduction from $\#\subsetsum$ to
$\rawbanzhaf$ (the raw version of Banzhaf's power index) is parsimonious
as well. Since $\rawbanzhaf$ is in $\sharpp$, $\rawbanzhaf$ is
$\sharpp$-parsimonious-complete. Thus the following theorem is,
essentially, a direct consequence of Lemma~\ref{thm:pp-com}.

\begin{theorem}\label{thm:banzhaf}
  $\powercompare_\banzhaf$ is $\pp$-complete.
\end{theorem}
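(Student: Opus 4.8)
The plan is to obtain Theorem~\ref{thm:banzhaf} by combining Lemma~\ref{thm:pp-member} and Lemma~\ref{thm:pp-com} with the fact, argued just above, that the raw Banzhaf power index $\rawbanzhaf$ is $\sharpp$-parsimonious-complete; all that is missing is routine bookkeeping to absorb the ``minor definitional issue'' separating $\powercompare_\banzhaf$ from $\compare_{\rawbanzhaf}$.

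First I would note that, since $\powercompare_\banzhaf$ stipulates that the two input games $G'$ and $G''$ have the same number of players $n$, the common scaling factor $2^{n-1}$ cancels, so $\banzhaf(G',i) > \banzhaf(G'',i)$ holds if and only if $\rawbanzhaf(G',i) > \rawbanzhaf(G'',i)$. Regarding $\rawbanzhaf$ as a $\sharpp$ function on strings that encode pairs $\langle G,i\rangle$, the map $(G',G'',i) \mapsto \langle\, \langle G',i\rangle,\ \langle G'',i\rangle\,\rangle$ is a polynomial-time many-one reduction from $\powercompare_\banzhaf$ to $\compare_{\rawbanzhaf}$; since $\compare_{\rawbanzhaf} \in \pp$ by Lemma~\ref{thm:pp-member} and $\pp$ is closed under polynomial-time many-one reductions, $\powercompare_\banzhaf \in \pp$.

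For hardness I would give a polynomial-time many-one reduction in the other direction, from $\compare_{\rawbanzhaf}$ (which is $\pp$-complete by Lemma~\ref{thm:pp-com}, as $\rawbanzhaf$ is $\sharpp$-parsimonious-complete) to $\powercompare_\banzhaf$. On input $\langle x,y\rangle$, interpret $x$ and $y$ as encodings of pairs $(G', i')$ and $(G'', i'')$ (ill-formed inputs are sent to a fixed ``no'' instance). Two cosmetic discrepancies remain. The designated players $i'$ and $i''$ may differ; this is harmless because $\rawbanzhaf(G,i)$ depends only on $w_i$ and the multiset $\{w_j : j \neq i\}$, so one can permute each game to move its designated player into the first coordinate without changing the raw index. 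And the two games may have different numbers of players; this is fixed, as noted after Definition~\ref{def:powercompare}, by padding the shorter one with weight-$0$ players, which also leaves all raw Banzhaf values unchanged. The output is then the triple $(G', G'', 1)$ with both games so normalized; by the previous paragraph it lies in $\powercompare_\banzhaf$ exactly when $\langle x,y\rangle \in \compare_{\rawbanzhaf}$.

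The only point that deserves genuine care---and it is still routine---is confirming that the parsimonious chain from $\#\xthreec$ through $\#\subsetsum$ to $\rawbanzhaf$ produces a game together with a single, explicitly identified player whose raw Banzhaf count equals the target value, so that the relabeling step above is legitimate; the rest is encoding hygiene. Granting this, Theorem~\ref{thm:banzhaf} is immediate, and indeed the whole argument is essentially just an application of Lemma~\ref{thm:pp-com}.
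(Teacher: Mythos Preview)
Your overall strategy matches the paper's: use Lemma~\ref{thm:pp-com} to get $\pp$-completeness of $\compare_{\rawbanzhaf}$ and then bridge the gap to $\powercompare_\banzhaf$ by normalizing the two games (permute the designated player to position~$1$, equalize the number of players). Membership is fine as you wrote it.

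There is, however, a real error in your hardness reduction. You assert that padding the shorter game with weight-$0$ players ``leaves all raw Banzhaf values unchanged.'' It does not. If $G'$ is obtained from $G=(w_1,\ldots,w_n;q)$ by appending a weight-$0$ player, then for every $T\subseteq\{1,\ldots,n\}\setminus\{i\}$ the two coalitions $T$ and $T\cup\{n+1\}$ contribute identically, so $\rawbanzhaf(G',i)=2\cdot\rawbanzhaf(G,i)$. Weight-$0$ padding preserves the \emph{normalized} Banzhaf index (the extra factor of $2$ is absorbed by the denominator $2^{n}$), which is exactly what the remark after Definition~\ref{def:powercompare} is about; but $\compare_{\rawbanzhaf}$ compares \emph{raw} values, and those get multiplied by $2^{k}$ on only one side. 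A concrete counterexample: take $G'$ with one player and $\rawbanzhaf(G',1)=1$, and $G''$ with two players and $\rawbanzhaf(G'',1)=1$; the $\compare_{\rawbanzhaf}$ answer is ``no,'' but after your padding the first game has raw value $2>1$ and your reduction outputs a ``yes'' instance.

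The paper fixes this by padding with players whose weight equals the game's quota rather than $0$: any coalition containing such a player is already winning, so no original player is ever critical in those coalitions, and hence the raw Banzhaf values of the original players are genuinely unchanged. With that single correction your argument goes through and is essentially the paper's proof.
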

\begin{proof}
  The raw version of the Banzhaf power index is
  $\sharpp$-parsimonious-complete and so, via Lemma~\ref{thm:pp-com},
  $\compare_\rawbanzhaf$ is $\pp$-complete.  Via a slight misuse of
  notation, we can say that $\compare_\rawbanzhaf$ accepts as input
  two weighted voting games, $G'$ and $G''$, and two players, $p'$ and
  $p''$, such that $p'$ participates in $G'$ and $p''$ participates in
  $G''$ and accepts if and only if $\rawbanzhaf(G',p') >
  \rawbanzhaf(G',p'')$. We give a reduction from
  $\compare_\rawbanzhaf$ to $\powercompare_\banzhaf$.

  Let $G',p'$ and $G'',p''$ be our input to the
  $\compare_\rawbanzhaf$ problem.  We can assume
  that $G'$ and $G''$ have the same number of players.  If $G'$ and
  $G''$ do not have the same number of players then it is easy to see
  that the game with fewer players can be padded with players whose
  weight is equal to this game's quota value. 
  Such a padding leaves the raw Banzhaf power index values
  of the game's original players unchanged. (The reason for this is that
  any coalition that includes any of the padding candidates is already 
  winning and so none of the original player's is critical to
  the success of the coalition, and so the coalition 
  does not contribute to original players' power index
  values.)

  We form two games, $K'$ and $K''$, that are identical to games
  $G'$ and $G''$, respectively, except that $K'$ lists player $p'$ as
  first and $G''$ lists player $p''$ as first.
  Our reduction's output is $(K',K'',1)$. 

  Naturally, $\banzhaf(K',1) > \banzhaf(K'',1)$ if and
  only if $\rawbanzhaf(G',p') > \rawbanzhaf(G'',p'')$. Also, clearly,
  $K'$ and $K''$ can be computed in polynomial time. Thus we have
  successfully reduced $\compare_\rawbanzhaf$ to
  $\powercompare_\banzhaf$. This shows $\pp$-hardness of
  $\powercompare_\banzhaf$. $\pp$-membership of
  $\powercompare_\banzhaf$ is, essentially, a simple consequence of
  Lemma~\ref{thm:pp-member}. This completes the proof.~\qed
\end{proof}

Let us now focus on the computational complexity of the power index
comparison problem for the case of Shapley-Shubik. 
It would be nice if
the raw Shapley-Shubik power index 
were $\sharpp$-parsimonious-complete.
If that 
were
the case then we could
establish $\pp$-completeness of $\powercompare_\shapley$ in
essentially the same way as we did for $\powercompare_\banzhaf$.
Thus it is natural to ask whether the Shapley-Shubik power index (i.e., its raw version)
is $\sharpp$-parsimonious-complete.
Prasad and Kelly~\cite{pra-kel:j:voting} 
at the end of their paper, after---in effect---showing
$\sharpp$-parsimonious-completeness of the raw Banzhaf power index
(their Theorem~4),  write:
``Such a straightforward approach does not seem possible with the
Shapley-Shubik [power index].'' We reinforce their intuition by now
proving that the raw Shapley-Shubik power index in fact is \emph{not}
$\sharpp$-parsimonious-complete.

\begin{theorem}\label{thm:shapley-nonparsimonious}
  The raw Shapley-Shubik power index (i.e., $\rawshapley$) is not
  $\sharpp$-parsimonious-complete.
\end{theorem}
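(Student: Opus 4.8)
The plan is to find a structural obstruction that parsimonious reductions cannot cross. The key observation is that the raw Shapley-Shubik power index $\rawshapley(G,i)$ is always a sum of terms of the form $\|S\|!(n-\|S\|-1)!$, and whenever player $i$ is critical for \emph{some} coalition $S$ of size $s$, the term $s!(n-s-1)!$ appears. The smallest such term (for fixed $n$) has a definite size; in fact, for $n \geq 2$, every nonzero value of $\rawshapley(G,i)$ is at least $1!\cdot(n-2)! = (n-2)!$, and moreover it is divisible by $\gcd$-type constraints. I would look for the cleanest such constraint: for instance, observe that if player $i$ is critical for \emph{any} coalition, then $i$ is critical for the coalition consisting of everyone except $i$ (size $n-1$) OR for the empty coalition, or — more usefully — that $\rawshapley(G,i)$ can never equal certain small naturals. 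In particular I claim $\rawshapley(G,i) \notin \{1\}$ for all games $G$ with $n \geq 2$ players (and the $n=1$ case gives $\rawshapley = 1$ trivially since the lone player is pivotal iff $q \leq w_1$, contributing $0!\cdot 0! = 1$, but also could be $0$). This means no $\sharpp$ function that takes the value $1$ infinitely often with the "wrong" accompanying structure can parsimoniously reduce to $\rawshapley$.

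More carefully, here is the argument I would run. Suppose for contradiction that $\rawshapley$ is $\sharpp$-parsimonious-complete. Consider a simple $\sharpp$ function $g$ whose range hits a value that $\rawshapley$ provably never attains, or — better, to avoid worrying about whether every small value is dodged — consider $g$ defined so that $g$ takes value exactly $1$ on some explicit infinite set and value $0$ elsewhere (e.g., $g(x) = 1$ if $x$ encodes a satisfiable-with-unique-assignment formula of a restricted syntactic form, else as appropriate; most simply, $g(x) = 1$ for all $x$, realized by an NP machine with exactly one accepting path on every input). This $g$ is clearly in $\sharpp$. If $g$ parsimoniously reduced to $\rawshapley$ via $\varphi$, then for every $x$ we would have $\rawshapley(\varphi(x)) = 1$, where $\varphi(x)$ encodes a game $G_x$ with player $i_x$. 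So it suffices to show that $\rawshapley(G,i) = 1$ has \emph{no} solutions among games with at least two players, and to handle the one-player case separately (pad to two players, or note that a parsimonious reduction target can be taken to always output $\geq 2$ player games — actually the reduction is adversarial, so I must show no game at all has $\rawshapley$-value $1$, which requires the one-player analysis: with $n=1$, $\rawshapley(G,1) = \success_G(\{1\}) - \success_G(\emptyset)$, which is $1$ exactly when $q \leq w_1$ and $q \geq 1$; so single-player games \emph{can} give value $1$). Hence value $1$ alone is not dodged, and I need a value that truly cannot occur for any $n$.

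**The real obstacle** is pinning down a natural number that $\rawshapley$ provably never outputs for \emph{any} weighted voting game. The fix: for $n = 1$ the only values are $0$ and $1$; for $n \geq 2$, every nonzero value is a sum of terms each of which is $s!(n-s-1)!$ for $0 \le s \le n-1$, and these are all even when $n \geq 3$? No — $0!(n-1)! = (n-1)!$, even for $n \geq 3$; $1!(n-2)!$, even for $n \geq 4$; in general $s!(n-1-s)!$ is even whenever $\max(s, n-1-s) \geq 2$, i.e., whenever $n \geq 5$, \emph{all} terms are even. For $n \in \{2,3,4\}$ a finite check shows the odd values obtainable are limited. So: pick $g$ to be a $\sharpp$ function taking value $3$ on infinitely many inputs with no structural constraint — concretely an NP machine with exactly three accepting paths on every input (clearly in $\sharpp$). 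If $g$ parsimoniously reduced to $\rawshapley$, every game $G_x$ in the image would satisfy $\rawshapley(G_x, i_x) = 3$. I would then verify by the parity/size analysis above that no weighted voting game has raw Shapley-Shubik value exactly $3$: for $n \geq 5$ the value is even; for $n = 1$ it is $\leq 1$; and for $n \in \{2,3,4\}$ one checks the possible sums of the relevant factorials $\{1,1,2\}$, $\{2,2,2,6\}$-type multisets and confirms $3$ is unreachable (e.g. $n=2$: terms are $1$ and $1$, sums in $\{0,1,2\}$; $n=3$: terms $2,2,2$ plus... wait $0!2! = 2$, $1!1! = 1$, $2!0! = 2$, so a critical coalition of size $1$ contributes $1$ — I must recount; the contribution when $\|S\|=1$ is $1!\,(3-1-1)!=1!\,1!=1$). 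This recount is the one place genuine care is needed; I would tabulate, for each $n \leq 4$, the multiset of per-coalition contributions and the set of achievable subset-sums, and confirm $3 \notin$ that set while also double-checking the chosen target against whatever parity statement I ultimately use. Once the unreachable value is fixed and verified, the contradiction is immediate and the theorem follows.
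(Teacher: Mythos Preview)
Your overall strategy---find a natural number that $\rawshapley$ provably never outputs, then use a constant $\sharpp$ function to derive a contradiction---is exactly the paper's approach. However, your chosen target value $3$ does not work: the three-player game $G=(2,1,2;2)$ with $i=1$ has $\rawshapley(G,1)=3$. Indeed, player~$1$ is critical for $S=\emptyset$ (contributing $0!\cdot 2!=2$) and for $S=\{2\}$ (contributing $1!\cdot 1!=1$), but not for $S=\{3\}$ or $S=\{2,3\}$ since those are already winning. So the ``confirm $3\notin$ that set'' step you flagged as needing care would in fact fail.

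There is also a small off-by-one slip: your own criterion $\max(s,n-1-s)\geq 2$ already holds for all $s$ once $n\geq 4$, not only $n\geq 5$. The paper exploits exactly this: for $n\geq 4$ every term $s!(n-1-s)!$ is even, hence $\rawshapley(G,i)$ is even; and for $n\leq 3$ one has $\rawshapley(G,i)\leq 3!=6$. Thus $7$ is never attained, and the contradiction follows. Replacing your $3$ by $7$ (or any odd integer $\geq 7$) repairs your argument and makes it coincide with the paper's proof.
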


\begin{proof}
For the sake of contradiction, let us assume that
$\rawshapley$ is $\sharpp$-parsimonious-complete.  
Thus for each natural number $k$ there is a weighted voting game
$G$ and a player $i$ within $G$ such that $\rawshapley(G,i) = k$. This
is the case because the function $f(x) = x$ belongs to $\sharpp$ (we
assume that the ``output $x$'' is an integer obtained via a standard
bijection between $\Sigma^*$ and $\naturals$) and if $\rawshapley$ is
$\sharpp$-parsimonious-complete then there has to be a parsimonious
reduction from $f$ to $\rawshapley$.

Let $G$ be an arbitrary voting game with $n \geq 4$ players and let
$i$ be a player in $G$. By definition, $\rawshapley(G,i)$ is a sum of
terms of the form $k!(n-k-1)!$, where $k$ is some value in $\{0,
\ldots, n-1\}$. Since $n \geq 4$, each such term is even and thus
$\rawshapley(G,i)$ is even.
The raw Shapley-Shubik
power index of any player in a game with at most $3$ players is at
most $3! = 6$ and  thus there is no input on which
$\rawshapley$ yields the value $7$.  This contradicts the assumption
that $\rawshapley$ is $\sharpp$-parsimonious-complete and completes
the proof.~\qed
\end{proof}

So the well-known result of Deng and
Papadimitriou~\cite{den-pap:j:solution-concepts} that the raw
Shapley-Shubik power index is $\sharpp$-metric-complete cannot be
strengthened to $\sharpp$-parsimonious-completeness.
Theorem~\ref{thm:shapley-nonparsimonious} prevents us from directly
using Lemma~\ref{thm:pp-com} to show that $\powercompare_\shapley$ is
$\pp$-complete. Nonetheless, via the following set of results  not only
do we establish that $\powercompare_\shapley$ is $\pp$-complete, but 
we also
strengthen the result of Deng and Papadimitriou via showing that
the raw Shapley-Shubik power index is $\sharpp$-many-one-complete
(i.e., is $\sharpp$-complete w.r.t.\ Zank{\'{o}}'s 
many-one
reductions~\cite{zan:j:sharp-p}).

To establish our results we need to be able to build $\xthreec$
instances that satisfy certain
properties. Fact~\ref{fact:x3c-transform} below lists three basic
transformations that we use to enforce these properties.

\begin{fact}\label{fact:x3c-transform}
  Let $(B,\calS)$ be an instance of $\xthreec$ and let
  $b_1, b_2, \ldots, b_6$ be elements that do not belong to $B$.
  Let $B_1 = \{b_1, b_2, b_3\}$, $B_2 = \{b_4, b_5, b_6\}$,
  $B_3 = \{b_1, b_4, b_5\}$ and $B_4 = \{b_1, b_4, b_6\}$.
  The following transformations preserve the number of
  solutions of the input instance:
  \begin{enumerate}
  \item $g(B,S) = (B \cup B_1, \calS \cup \{B_1\})$,
  \item $h'(B,S) = (B \cup B_1 \cup B_2, \calS \cup
    \{B_1,B_2,B_3\})$,
  \item $h''(B,S) = (B \cup B_1 \cup B_2, \calS \cup
    \{B_1,B_2,B_3,B_4\})$,
  \end{enumerate}
\end{fact}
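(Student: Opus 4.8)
The plan is to handle all three transformations with a single template. In each case the enlarged ground set has size a multiple of $3$: starting from $|B| = 3k$ we add $3$ new elements for $g$ and $6$ for $h'$ and $h''$, so the transformed instance again has a well-defined notion of solution, namely an exact cover of cardinality $k+1$ for $g$ and $k+2$ for $h'$ and $h''$. Recall that the sets chosen in any exact cover are automatically pairwise disjoint, since $m$ triples whose union has the matching size $3m$ cannot overlap. The observation common to all three cases is that every member of $\calS$ is contained in $B$, hence is disjoint from the six new elements, so the only sets available to cover any new element are the newly adjoined sets $B_1, \ldots, B_4$.

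First I would do $g$. The elements of $B_1$ occur in no set other than $B_1$, so $B_1$ must lie in every solution of $g(B,\calS)$; deleting it leaves $k$ pairwise-disjoint sets from $\calS$ whose union is $B$, i.e., a solution of $(B,\calS)$. Conversely, adjoining $B_1$ to any solution of $(B,\calS)$ produces a solution of $g(B,\calS)$. These two maps are mutually inverse, so $\#\xthreec$ is preserved.

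Next I would treat $h'$ and $h''$ together. In $h'(B,\calS)$ the new elements $b_2, b_3$ lie only in $B_1$ and $b_6$ lies only in $B_2$, so both $B_1$ and $B_2$ are forced into every solution; but then $b_1, b_4, b_5$ are already covered, so $B_3 = \{b_1, b_4, b_5\}$ cannot also be chosen (pairwise disjointness), and no set of $\calS$ touches the new elements, so a solution of $h'(B,\calS)$ is exactly $\{B_1, B_2\}$ together with a solution of $(B,\calS)$. For $h''(B,\calS)$ the same reasoning forces $B_1$ in (because of $b_2, b_3$), which rules out both $B_3$ and $B_4$ since each contains $b_1$, now covered by $B_1$; the remaining new elements $b_4, b_5, b_6$ can then only be covered by $B_2$, so $B_2$ is forced as well, and again a solution is $\{B_1, B_2\}$ plus a solution of $(B,\calS)$. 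In both cases this correspondence is a bijection, so the number of solutions is unchanged.

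The argument is elementary bookkeeping; the one point that genuinely needs attention is confirming, for $h'$ and $h''$, that the ``spare'' sets $B_3$ (respectively $B_3$ and $B_4$) can never occur in an exact cover. This is precisely where pairwise disjointness of exact covers is used: once the privately owned elements $b_2, b_3$ force $B_1$ into the solution, the element $b_1$ is already covered, and every spare set contains $b_1$. I do not anticipate any other difficulty.
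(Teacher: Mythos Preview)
Your argument is correct. The paper states this result as a \emph{Fact} and gives no proof at all, leaving the verification to the reader; your case analysis supplies exactly the elementary bookkeeping the paper omits, and every step checks out (in particular, your identification of the ``privately owned'' elements $b_2,b_3$ forcing $B_1$, and $b_6$ forcing $B_2$ in $h'$, is the right way to see why the spare sets $B_3$, $B_4$ can never participate).
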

In the following lemma we use these transformations to, in some sense,  normalize
$\xthreec$ instances.

\begin{lemma}\label{thm:two-thirds}
  There is a polynomial-time algorithm that given an $\xthreec$
  instance $X = (B,\calS)$ outputs instance $X'' = (B'', \calS'')$ such that
  $\#\xthreec(X'') = \#\xthreec(X)$ and $\frac{\frac{1}{3}\|B''\|}{\|\calS''\|} =
  \frac{2}{3}$.
\end{lemma}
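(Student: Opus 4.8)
The plan is to build $X''$ from $X$ by repeatedly applying the transformations $g$ and $h''$ of Fact~\ref{fact:x3c-transform} (the transformation $h'$ turns out not to be needed for this lemma). Since each of these transformations preserves the number of solutions, any instance reached by composing them automatically satisfies $\#\xthreec(\cdot)=\#\xthreec(X)$, so the only thing to engineer is the size condition. Rewriting that condition, $\frac{\frac13\|B''\|}{\|\calS''\|}=\frac23$ is equivalent to $\|B''\|=2\|\calS''\|$, i.e., to $\delta(X'')=0$, where for an instance $Y=(B,\calS)$ I set $\delta(Y)=\|B\|-2\|\calS\|$. Thus the goal becomes simply: drive $\delta$ to $0$.

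Next I would record the effect of each transformation on $\delta$. Since $g$ adds one $3$-element set over three new elements, it changes $(\|B\|,\|\calS\|)$ by $(+3,+1)$ and hence changes $\delta$ by $+1$; since $h''$ adds four $3$-element sets over six new elements, it changes $(\|B\|,\|\calS\|)$ by $(+6,+4)$ and hence changes $\delta$ by $-2$ (for the record, $h'$ changes $\delta$ by $0$). So composing $g$ and $h''$ lets me move $\delta$ by $+1$ or $-2$ per step, starting from $\delta_0:=\delta(X)$, while never decreasing $\|\calS\|$ and while keeping $\|B\|$ a multiple of $3$ (each step adds $3$ or $6$ to it), so every intermediate instance stays a legitimate $\xthreec$ instance.

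The algorithm is then: compute $\delta_0$; if $\delta_0\le 0$, apply $g$ exactly $-\delta_0$ times; if $\delta_0>0$ is even, apply $h''$ exactly $\delta_0/2$ times; and if $\delta_0>0$ is odd, apply $g$ once and then $h''$ exactly $(\delta_0+1)/2$ times. In each case one checks immediately that the resulting instance $X''$ has $\delta(X'')=0$, hence $\|B''\|=2\|\calS''\|$ and $\frac{\frac13\|B''\|}{\|\calS''\|}=\frac23$ (the degenerate case $\|\calS\|=0$, where $X$ is the empty instance or has no sets, is handled by first applying $g$ once, after which $\|\calS\|\ge 1$ forever and the displayed ratio is well defined). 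For the running time, $|\delta_0|\le\|B\|+2\|\calS\|$ is linear in the size of $X$, so the number of transformations applied is linear in $|X|$; each application is polynomial-time computable and enlarges the instance by only a constant, so the whole procedure runs in polynomial time. The one point needing a moment's thought — and the only real subtlety — is the parity mismatch: $h''$ moves $\delta$ only in steps of $2$, so a positive odd $\delta_0$ cannot be cancelled by $h''$ alone, which is exactly why we first spend one application of $g$ to flip the parity before finishing with $h''$; apart from this, the argument is routine bookkeeping.
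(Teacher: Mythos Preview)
Your proof is correct and is essentially the same as the paper's: both reduce the target condition to driving the quantity $2\|\calS\|-\|B\|$ (your $-\delta$) to zero using $g$ (which changes it by $-1$) and $h''$ (which changes it by $+2$), first using $h''$ to handle the sign and then $g$ to finish (you swap the order in the odd case, but the counts are identical). Your explicit treatment of the parity issue and the $\|\calS\|=0$ edge case is a bit more careful than the paper's exposition, but the underlying argument is the same.
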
 
\begin{proof}
  Let $X = (B,\calS)$ be our input $\xthreec$ instance and let $3k =
  \|B\|$ and $m = \|\calS\|$. Let $g$ and $h''$ be the
  transformations as in Fact~\ref{fact:x3c-transform}. The idea of our
  algorithm is to apply transformation $g$ to $X$ so many times as to
  achieve the $\frac{2}{3}$ ratio.
  Let $t$ be some nonnegative
  integer and let $(B_t,\calS_t) = g^{(t)}(B,\calS)$. We observe that
  $\frac{\frac{1}{3}\|B_t\|}{\|\calS_t\|} = \frac{k+t}{m+t}$ and that if $t = 2m-
  3k$ (assuming this value is nonnegative) then %
  $\frac{k+t}{m+t} = \frac{2}{3}$.

  Our algorithm works as follows. First, we form instance $X' =
  (B',\calS')$ such that $2\|\calS'\| - 3\cdot\frac{1}{3}\|B'\| \geq 0$.  If $2m - 3k
  \geq 0$ then we set $X' = X$ and otherwise we repeatedly apply
  transformation $h''$, until this condition is met.  (It is easy to
  see that $\lceil \frac{3k-2m}{2} \rceil$ applications are
  sufficient.)  Then we derive the instance $X''$ from $X'$ via
  $2\|\calS'\| - 3\cdot\frac{1}{3}\|B'\|$ applications of $g$. That is, $X'' =
  g^{(2\|\calS'\| - 3\cdot\frac{1}{3}\|B'\|)}(X')$.

  Naturally, the algorithm runs in polynomial time. The correctness
  follows via the observation in the first paragraph and the fact that
  transformations $g$ and $h''$ preserve the number of solutions.~\qed
\end{proof}

Finally, we are ready to show that the raw Shapley-Shubik power index
is $\sharpp$-many-one-complete.

\begin{theorem}\label{thm:many-one}
  The raw Shapley-Shubik power index (i.e., $\rawshapley$) is
  $\sharpp$-many-one-complete.
\end{theorem}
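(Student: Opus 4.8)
The plan is to establish $\sharpp$-hardness by a Zank{\'{o}} many-one reduction; membership is routine, since $\rawshapley \in \sharpp$ is witnessed by an $\np$ machine that on input $(G,i)$ --- with $G$ an $n$-player game --- guesses a set $S \subseteq N - \{i\}$ together with a linear order of $S$ and a linear order of $N - S - \{i\}$ and accepts iff $\success_G(S \cup \{i\}) - \success_G(S) = 1$; the number of accepting paths is then exactly $\rawshapley(G,i)$. For hardness, I would take an arbitrary $f \in \sharpp$ and, using the $\sharpp$-parsimonious-completeness of $\#\xthreec$ recalled above, first replace $x$ by an $\xthreec$ instance with the same solution count; the real content is to turn that instance into a weighted voting game and a designated player whose raw Shapley-Shubik value is an easily invertible function of $\#\xthreec$.

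Concretely: given $x$, compute an $\xthreec$ instance with solution count $f(x)$, pad it (count-preservingly, e.g.\ using the transformations of Fact~\ref{fact:x3c-transform}) up to a size that is a fixed function of $|x|$, and then apply Lemma~\ref{thm:two-thirds} to obtain $X'' = (B'',\calS'')$ with $\#\xthreec(X'') = f(x)$ and $\tfrac13\|B''\|/\|\calS''\| = \tfrac23$; write $m = \|\calS''\|$ and $k = \tfrac13\|B''\| = \tfrac23 m$, noting that $m$ and $k$ now depend only on $|x|$. Build a game $G$ with one ``selector'' player per set $S_j \in \calS''$, of weight $w_j = \sum_{\ell : b_\ell \in S_j}(m+1)^\ell$ (a base-$(m+1)$ positional encoding, in which sums of selector weights never carry); one designated player $p$ of weight $1$; some number $F$ of weight-$0$ ``free'' players and some ``blocking'' players of weight at least the quota $q := 1 + \sum_{\ell=1}^{\|B''\|}(m+1)^\ell$. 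Then $p$ is critical for a coalition $C \not\ni p$ exactly when the selectors in $C$ sum digit-by-digit to $q - 1$, i.e.\ exactly when they form an exact cover of $B''$ (the free players in $C$ unconstrained, the blocking players absent). Every exact cover uses exactly $k$ of the $m$ sets, so each exact cover spawns critical coalitions with the same multiset of sizes and contributes the same amount $c$ to $\rawshapley(G,p)$; a short computation yields a closed form for $c$ (a ratio of factorials in $m$, $k$, and the total number of players $n$), and hence $\rawshapley(G,p) = c \cdot f(x)$, where $c$ is a polynomial-time-computable function of $|x|$ once $F$ and the number of blocking players are pinned down.

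The step I expect to be the main obstacle is exhibiting a \emph{single} polynomial-time $\psi$ with $f(x) = \psi(c \cdot f(x))$: a Zank{\'{o}} post-processor is not given $x$, so $\psi$ must reconstruct the input-length-dependent multiplier $c$ from the returned integer alone --- and $c$ grows without bound (it always exceeds a growing factorial, which is exactly why, by Theorem~\ref{thm:shapley-nonparsimonious}, no such reduction can be made parsimonious). My approach would be to use the a priori bound $f(x) = \#\xthreec(X'') \le \binom{m}{k}$: the returned value $c \cdot f(x)$ then lies in the band $[\,c,\ c\binom{m}{k}\,]$, which depends only on $|x|$, and by throwing in enough free and blocking players one can make $c$ large enough, relative to the band for the previous length, that the bands for distinct input lengths are pairwise disjoint. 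Then $\psi$, given a value $v$, recomputes for $L = 1, 2, \dots$ the parameters $m_L, k_L, n_L$, the multiplier $c^{(L)}$, and the band $[\,c^{(L)}, c^{(L)}\binom{m_L}{k_L}\,]$, locates the unique band containing $v$, and outputs $v / c^{(L)}$ (and outputs $0$ if $v = 0$). The remaining work --- checking that polynomially many extra players suffice to separate the bands, that $\psi$ is polynomial-time, that the no-carry and closed-form computations are correct, and that all the padding is count-preserving --- is routine.
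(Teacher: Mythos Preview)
Your plan is correct and follows the same blueprint the paper uses: normalize an $\xthreec$ instance to the $2/3$ ratio, convert it to a weighted game in which the designated player's raw Shapley--Shubik value equals a known multiplier times $\#\xthreec$, and have $\psi$ recover that multiplier from the returned integer by locating it in one of a family of pairwise-disjoint bands. The paper's execution is cleaner in one respect. Rather than indexing bands by input length and introducing free and blocking players to force band separation, the paper further restricts to $\xthreec$ instances with $\tfrac{1}{3}\|B\| = 2n$, $\|\calS\| = 3n$, and $n$ a power of~$4$ (this restricted problem, called $\#\xthreec'$, is still $\sharpp$-parsimonious-complete via Lemma~\ref{thm:two-thirds} and repeated use of the $h'$ transformation), and then uses only the single extra weight-$1$ player of the Deng--Papadimitriou construction. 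The multiplier is then exactly $r_1(n) = n!(2n)!$, the band for a given $n$ is $[r_1(n),\, r_1(n)\cdot 2^{3n}]$, and the elementary inequality $r_1(4n) > r_1(n)\cdot 2^{3n}$ makes the bands for successive powers of~$4$ disjoint with no extra players at all; $\psi$ simply tries $t = 0,1,2,\ldots$ until the input falls in $[r_1(4^t), r_2(4^t)]$ and divides by $r_1(4^t)$.

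Your variant also works, but note two things. First, the order ``pad to a fixed size, then apply Lemma~\ref{thm:two-thirds}'' is delicate, since the lemma's output size depends on both $\|B\|$ and $\|\calS\|$; it is cleaner to apply Lemma~\ref{thm:two-thirds} first and then pad with $h'$ (which preserves the $2/3$ ratio and increments $n$ by~$1$) up to a target $n_L$. Second, your promise that ``polynomially many extra players suffice to separate the bands'' is true but is genuine work you would have to carry out (a Stirling-type estimate showing that, e.g., $B_L$ polynomial in $L$ of one degree higher than $n_L$ suffices), whereas the paper's power-of-$4$ normalization reduces the whole separation argument to a one-line inequality.
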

\begin{proof}
  The raw Shapley-Shubik power index is in $\sharpp$ and
  thus it remains to show that it is $\sharpp$-many-one-hard. To do
  so, we give a 
many-one reduction from $\#\xthreec'$ to
  $\rawshapley$. $\#\xthreec'$ is a restriction of $\#\xthreec$
  to instances $X = (B,S)$ such that:
  (1) $\frac{\frac{1}{3}\|B\|}{\|\calS\|} = \frac{2}{3}$.
  (2) If $n$ is a nonnegative integer such that $\frac{1}{3}\|B\| = 2n$ and
    $\|\calS\|=3n$ then there is a nonnegative integer $t$ such that
    $n = 4^t$.
  To see that the thus restricted $\#\xthreec$ function is
  $\sharpp$-parsimonious-complete it is enough to consider
  Lemma~\ref{thm:two-thirds} and transformation $h'$ from
  Fact~\ref{fact:x3c-transform}.

  Let $\varphi_s$ be the standard, parsimonious reduction from
  $\#\xthreec$ to $\#\subsetsum$ (see, e.g.,~\cite
  [Theorem~9.10]{pap:b:complexity}). %
  $\varphi_s$ has the property that given an instance $(B, {\cal S})$,
  where $\|B\| = 3k$ and $\|S\| = m$,
  $\varphi_s(B,{\cal S})$ is an instance $(s_1, \ldots, s_m; q)$ of
  $\subsetsum$ such that every subset of $\{s_1, \ldots, s_m\}$ that
  sums up to $q$ has exactly $k$ elements.  Given such an instance
  $(s_1, \ldots, s_m; q)$, Deng and
  Papadimitriou~\cite[Theorem~9]{den-pap:j:solution-concepts} observe
  that the raw Shapley-Shubik power index of the first player in game
  $(1, s_1, \ldots, s_m; q+1)$ is exactly
  $(m - k)!k!\cdot \#\subsetsum(s_1,\ldots,s_n; q)$.  Since
  $\varphi_s$ is parsimonious, this value is equal to $(n - m)!m!\cdot
  \#\xthreec(B,\calS)$.

  We now provide functions $\varphi$ and $\psi$ that constitute a
many-one reduction from $\#\xthreec'$ to
  $\rawshapley$. We need to ensure that for each $\#\xthreec'$
  instance $X$\footnote{We assume that the inputs to $\varphi$ satisfy
    the requirements of being $\#\xthreec'$ instances.  We implicitly
    replace any instance that does not fulfill this requirement with a
    fixed instance that does satisfy it and that has no solutions.} it
  holds that $\#\xthreec'(X) = \psi(\rawshapley(\varphi(X)))$. We
  first describe how to compute $\varphi$ and $\psi$ and then explain
  why they have this property.

  Given $\#\xthreec'$ instance $X$, we compute $\varphi(X)$ as
  follows: We compute $\subsetsum$ instance $\varphi_s(X) = (s_1,
  \ldots, s_n; q)$ and output game $(1, s_1, \ldots, s_n; q+1)$.
  Function $\psi$ is a little more involved. Define $r_1(n) = n!(2n)!$
  and $r_2(n) = n!(2n)!  2^{3n}$.  Given a nonnegative integer $x$, we
  compute $\psi(x)$ using the following algorithm. If $x = 0$ then
  return $0$. Otherwise, find the smallest nonnegative integer $t$
  such that $r_1(4^t) \leq x \leq r_2(4^t)$ and output $\lfloor
  \frac{x}{r_1(4^t)} \rfloor$. If there is no such $t$ then return
  $0$. Function $\psi(x)$ can be computed in polynomial time via
  computing $r_1(4^t)$ and $r_2(4^t)$ for successive values of $t$. It
  is easy to see that we only need to try $O(\log x)$ many $t$'s and
  thus $\psi$ is computable in polynomial time with respect to the
  binary representation of $x$.

  Let us now show that indeed for any $\#\xthreec'$ instance $X$ it
  holds that $\#\xthreec'(X) = \psi(\rawshapley(\varphi(X)))$.  Let $X
  = (B,S)$ be an arbitrary $\#\xthreec'$ instance and let $n$ be a
  nonnegative integer such that $\frac{1}{3}\|B\| = 2n$ and $\|\calS\|
  = 3n$. (The existence of such an $n$ is guaranteed via the fact that in
  any $\#\xthreec'$ instance $\frac{\frac{1}{3}\|B\|}{\|\calS\|} =
  \frac{2}{3}$.)  Via the properties of $\varphi_s$ and $\varphi$ we see
  that
  \[\rawshapley(\varphi(X)) = n!(2n)! \#\xthreec'(X) = r_1(n)\#\xthreec'(X).\]
  It is easy to see that $\#\xthreec'(X) \leq 2^{3n}$ and thus,
  assuming that $\#\xthreec'(X) \geq 1$, we have that
  \[r_1(n) \leq \rawshapley(\varphi(X)) \leq r_2(n).\]
  Via routine calculation we see that for any positive integer $n$ it
  holds that $r_1(4n) > r_2(n)$. Thus the intervals $[r_1(4^t),r_2(4^t)]$
  are disjoint and given $\rawshapley(\varphi(X))$ as input, the function
  $\psi$ correctly identifies the $r_1(n)$ factor and outputs 
  the
  answer $\#\xthreec'(X)$. Clearly, $\psi$ also works correctly when
  $\rawshapley(\varphi(X)) = 0$.~\qed
\end{proof}

\begin{lemma}\label{thm:x3c-transform}
  There is a polynomial-time algorithm that given two $\xthreec$
  instances $X = (B_x,\calS_x)$ and $Y=(B_y,\calS_y)$ outputs two
  $\xthreec$ instances $X'' = (B''_y,\calS''_y)$ and $Y''=(B''_y,\calS''_y)$
  such that
     $\|B''_x\| = \|B''_y\|$,
     $\|\calS''_x\| = \|\calS''_y\|$,
     $\#\xthreec(X) = \#\xthreec(X'')$, and
     $\#\xthreec(Y) = \#\xthreec(Y'')$.
\end{lemma}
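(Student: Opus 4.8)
The goal is, given two arbitrary \xthreec{} instances $X$ and $Y$, to produce instances $X''$ and $Y''$ with the same number of base elements and the same number of sets, while preserving solution counts. The natural approach is to use the three solution-count-preserving transformations $g$, $h'$, $h''$ from Fact~\ref{fact:x3c-transform}, exactly as in the proof of Lemma~\ref{thm:two-thirds}. Recall that $g$ adds one element-triple and one set (so it increments ``$\frac13\|B\|$'' and ``$\|\calS\|$'' each by one, keeping their difference fixed), $h'$ adds two element-triples and three sets (so it adds $2$ to $\frac13\|B\|$ and $3$ to $\|\calS\|$), and $h''$ adds two element-triples and four sets (adds $2$ to $\frac13\|B\|$ and $4$ to $\|\calS\|$). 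Since $g$ changes ``number of sets minus $\frac13\|B\|$'' by $0$, $h'$ changes it by $+1$, and $h''$ by $+2$, we have enough freedom to move both coordinates $(\frac13\|B\|, \|\calS\|)$ independently upward.

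First I would reduce to a clean normal form: apply Lemma~\ref{thm:two-thirds} to each of $X$ and $Y$ separately, obtaining instances (still with the same solution counts) in which $\frac{\frac13\|B\|}{\|\calS\|} = \frac23$; write the resulting parameters as $\frac13\|B_x\| = 2a$, $\|\calS_x\| = 3a$ and $\frac13\|B_y\| = 2b$, $\|\calS_y\| = 3b$ for nonnegative integers $a,b$. Now the task is purely to equalize $a$ and $b$: I want to reach a common value $c \ge \max(a,b)$ via transformations that keep the $\frac23$ ratio. The transformation $g$ applied to an instance with parameters $(2t, 3t)$ yields $(2t+1, 3t+1)$, which no longer has ratio $\frac23$; but note that applying $g$ three times to a ratio-$\frac23$ instance with $\|\calS\| = 3t$ gives $(2t+3, 3t+3)$, which again has $\frac13\|B\| = 2(t+1)$ and $\|\calS\| = 3(t+1)$, i.e.\ it is back in the ``$c$-indexed'' family with $c = t+1$. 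So each ``triple-$g$'' step increments the index $c$ by one while preserving both the $\frac23$ ratio and the solution count. Therefore, to the $X$-side apply $g$ a total of $3(\max(a,b)-a)$ times and to the $Y$-side apply $g$ a total of $3(\max(a,b)-b)$ times; the outputs $X''$ and $Y''$ then have $\frac13\|B''_x\| = \frac13\|B''_y\| = 2\max(a,b)$ and $\|\calS''_x\| = \|\calS''_y\| = 3\max(a,b)$, as required, and solution counts are untouched since each $g$ preserves them.

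For the wrap-up I would note that all steps are polynomial time: Lemma~\ref{thm:two-thirds}'s algorithm runs in polynomial time, the numbers $a$ and $b$ are polynomially bounded in the input sizes, and we apply $g$ only $O(a+b)$ times, each application adding a constant amount of data. The correctness is immediate from (i)~Lemma~\ref{thm:two-thirds} for the normalization, (ii)~the observation that three successive applications of $g$ send $(2t,3t)$ to $(2(t+1),3(t+1))$, and (iii)~the fact from Fact~\ref{fact:x3c-transform} that $g$ preserves the number of solutions.

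**Main obstacle.**
I expect the only real subtlety is not a deep one but a bookkeeping one: verifying that repeated $g$ applications keep the arithmetic exactly right (that ``triple-$g$'' is genuinely the right unit step, so that both size-equalities hold on the nose rather than off by a constant) and confirming the polynomial bound on the number of applications given how $a,b$ depend on the original instance sizes after Lemma~\ref{thm:two-thirds}. There is no conceptual difficulty; the transformations of Fact~\ref{fact:x3c-transform} already do all the work, and unlike the \rawshapley{} many-one reduction this lemma needs no delicate interval-separation argument.
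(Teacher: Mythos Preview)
Your overall plan---first normalize both instances to the $\frac{2}{3}$ ratio via Lemma~\ref{thm:two-thirds}, then pad the smaller one until the sizes match---is exactly the paper's approach. However, your key step contains an arithmetic slip that breaks the argument. You claim that applying $g$ three times to an instance with parameters $(\frac{1}{3}\|B\|,\,\|\calS\|) = (2t,3t)$ yields $(2(t+1),3(t+1))$. In fact three applications of $g$ give $(2t+3,\,3t+3)$, and $2t+3 \neq 2(t+1)$; the ratio $\frac{2t+3}{3t+3}$ is not $\frac{2}{3}$. More damagingly, since each application of $g$ increments both coordinates by exactly one, applying $g$ any number of times to $X'$ and to $Y'$ can never equalize both $\|B\|$ and $\|\calS\|$ unless $a=b$ already: from $(2a+p,\,3a+p)$ and $(2b+q,\,3b+q)$ the two equalities $2a+p=2b+q$ and $3a+p=3b+q$ subtract to give $a=b$.

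The fix is immediate and is precisely what the paper does: use $h'$ instead of ``triple-$g$.'' A single application of $h'$ sends $(2t,3t)$ to $(2t+2,\,3t+3) = (2(t+1),\,3(t+1))$, so $h'$ genuinely increments the index by one while preserving the $\frac{2}{3}$ ratio. After normalization, assuming without loss of generality $a \leq b$, set $Y'' = Y'$ and apply $h'$ exactly $b-a$ times to $X'$; the resulting $X''$ has parameters $(2b,3b)$, matching $Y''$. Everything else in your outline---the appeal to Fact~\ref{fact:x3c-transform} for solution-count preservation and the polynomial-time bound---goes through unchanged.
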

\begin{proof}
  We first use the algorithm from Lemma~\ref{thm:two-thirds} to derive
  instances $X' = (B'_x,\calS'_X)$ and $Y' = (B'_x,\calS'_X)$ such
  that $\#\xthreec(X) = \#\xthreec(X'')$, $\#\xthreec(Y) =
  \#\xthreec(Y'')$, $\frac{\frac{1}{3}\|B'_x\|}{\|\calS\|} =
  \frac{2}{3}$, and $\frac{\frac{1}{3}\|B'_x\|}{\|\calS\|} =
  \frac{2}{3}$. Without the loss of generality we can assume that
  $\|B'_x\| \leq \|B'_y\|$. We set $Y'' = Y'$ and derive $X''$ via
  repeatedly applying transformation $h'$ from
  Fact~\ref{fact:x3c-transform} to $X'$, until the condition of the
  theorem is met.~\qed
\end{proof}

In the next lemma and theorem we prove the $\pp$-completeness of
$\powercompare_\shapley$.

\begin{lemma}\label{thm:compare-via-ss}
Let $f$ and $g$ be two arbitrary $\sharpp$ functions. There exists
a polynomial-time computable function $\cmp(x,y)$ such that
$(\forall x,y \in \Sigma^*)[f(x) > g(y) \iff \cmp(x,y) \in \powercompare_\shapley]$.
\end{lemma}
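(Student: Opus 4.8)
The plan is to reduce the problem ``$f(x) > g(y)$?'' to comparing raw Shapley-Shubik values of the first player in two carefully constructed weighted voting games, mirroring the strategy used in Theorem~\ref{thm:banzhaf} but working around the scaling factors $(m-k)!\,k!$ that the Shapley-Shubik index introduces and that we cannot control parsimoniously. First I would reduce both $f$ and $g$ to $\#\xthreec$ parsimoniously (possible since $\#\xthreec$ is $\sharpp$-parsimonious-complete), obtaining $\xthreec$ instances $X_f$ and $X_g$ with $\#\xthreec(X_f) = f(x)$ and $\#\xthreec(X_g) = g(y)$. Then I would apply Lemma~\ref{thm:x3c-transform} to $X_f$ and $X_g$ to obtain instances $X''_f$ and $X''_g$ with $\|B''_f\| = \|B''_g\|$ and $\|\calS''_f\| = \|\calS''_g\|$, so both now have, say, ``base set'' of size $3k$ and family size $m$ with the \emph{same} $k$ and $m$, while the solution counts are unchanged.

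Next I would run the standard parsimonious reduction $\varphi_s$ from $\#\xthreec$ to $\#\subsetsum$ on each, getting $\subsetsum$ instances $(s_1,\ldots,s_m;q_f)$ and $(s'_1,\ldots,s'_m;q_g)$ in which every solving subset has exactly $k$ elements. By the Deng--Papadimitriou observation (their Theorem~9, quoted in the excerpt), the raw Shapley-Shubik value of the first player in the game $(1,s_1,\ldots,s_m;q_f+1)$ equals $(m-k)!\,k!\cdot\#\subsetsum(s_1,\ldots,s_m;q_f) = (m-k)!\,k!\cdot f(x)$, and likewise the game built from the second instance gives raw Shapley-Shubik value $(m-k)!\,k!\cdot g(y)$. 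Crucially, because the two $\xthreec$ instances were normalized to share the same $k$ and $m$, the multiplicative constant $(m-k)!\,k!$ is \emph{identical} for both games; hence $\rawshapley$ of the first player in the first game exceeds that in the second game if and only if $f(x) > g(y)$. I then define $\cmp(x,y)$ to output the pair of these two games together with the player index $1$; padding one game with quota-weight players (as in Theorem~\ref{thm:banzhaf}) ensures both games have the same number of players, so the $n!$ scaling in $\shapley$ cancels and $\powercompare_\shapley$ on this input answers exactly ``$f(x) > g(y)$.'' Everything is polynomial-time computable, so $\cmp$ is as required.

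The main obstacle is the one the above steps are designed to overcome: unlike in the Banzhaf case, the raw Shapley-Shubik value is not the subset-sum count itself but that count scaled by $(m-k)!\,k!$, a factor that depends on the instance's dimensions. A naive reduction would leave the two games with different scaling factors and the comparison would be meaningless. The leverage comes from Lemma~\ref{thm:x3c-transform}, which lets us force both $\xthreec$ instances (and hence both $\subsetsum$ instances, since $\varphi_s$ maps an $(3k,m)$ instance to an $m$-element $\subsetsum$ instance whose solutions have size $k$) to have matching $k$ and $m$ before we ever invoke Deng--Papadimitriou; after that the common factor simply cancels in the comparison. A secondary, routine point is handling the degenerate case $f(x)=g(y)$ (including both being $0$): the strict inequality $\rawshapley > \rawshapley$ fails exactly when the scaled counts are equal, which since the scaling factors agree is exactly when $f(x)=g(y)$, so correctness holds on the nose with no special-casing needed.
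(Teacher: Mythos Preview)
Your proposal is correct and follows essentially the same route as the paper's proof: parsimoniously reduce $f$ and $g$ to $\#\xthreec$, normalize the two instances via Lemma~\ref{thm:x3c-transform} so they share the same $(k,m)$, and then apply the Deng--Papadimitriou game construction so that the common factor $(m-k)!\,k!$ cancels in the comparison. One minor remark: the padding step you mention is unnecessary (after normalization both games already have $m+1$ players), and in fact quota-weight padding as in Theorem~\ref{thm:banzhaf} would \emph{not} leave raw Shapley-Shubik values unchanged, since the factorial weights $|S|!(n-|S|-1)!$ shift when $n$ grows---but since no padding is actually needed here, this does not affect the correctness of your argument.
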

\begin{proof}
  Let $f$ and $g$ be as in the lemma and let $x$ and $y$ be two
  arbitrary strings. %
  Since both $f$ and $g$ are in $\sharpp$ and $\#\xthreec$ is
  $\sharpp$-parsimonious-complete, there exist functions $\varphi_f$
  and $\varphi_g$ that compute parsimonious reductions from $f$ to
  $\#\xthreec$ and from $g$ to $\#\xthreec$, respectively.\footnote{We
    assume that neither $\varphi_f$ nor $\varphi_g$ ever outputs a
    malformed instance of $\xthreec$. This property is easy to
    enforce via the following modification:
    Whenever either $\varphi_f$ or $\varphi_g$ is about to output a malformed instance, 
    replace it with a fixed, correct one that has no solutions.}

  Let $(B_x,{\cal S}_x) = \varphi_f(x)$ and $(B_y,{\cal S}_y) =
  \varphi_g(y)$.  Via Lemma~\ref{thm:x3c-transform} (and through a slight
  abuse of notation) we ensure that
  $\|B_x\| = \|B_y\| = 3k$ and that $\|{\cal S}_x\| = \|{\cal S}_y\| =
  r$, where $r$ and $k$ are two nonnegative integers.  Let $\varphi$
  be the reduction function from the proof of
  Theorem~\ref{thm:many-one}. (Note that in the proof of
  Theorem~\ref{thm:many-one} we restricted $\varphi$ to work only on
  instances of $\xthreec$ that fulfill a special requirement. For the
  purpose of this proof we disregard this requirement.)

  We now describe our function $\cmp$.  Given an instance $(B_x,
  {\cal S}_x)$ we compute $G_x = \varphi(X)$ and $G_y = \varphi(Y)$.
  We define $\cmp(x,y)$ to output $(G_x, G_y, 1)$. 
  Via the properties of $\varphi$ discussed in the proof of
  Theorem~\ref{thm:many-one}, it holds that
  \begin{eqnarray*}
    \rawshapley(G_x,1) &=& (r - k)!k!\cdot \#\xthreec( B_x, {\cal
      S}_x) = (r - k)!k!f(x)\mbox{, and} \\
    \rawshapley(G_y,1) &=& (r - k)!k!\cdot \#\xthreec( B_y, {\cal
      S}_y) = (r - k)!k!g(y).
  \end{eqnarray*}
  Thus $f(x) > f(y)$ if and only if $\shapley(G_x,1) >
  \shapley(G_y,1)$, and so it is clear that the function $\cmp$ does what the
  theorem claims. Naturally, $\cmp$ can be computed in polynomial
  time.~\qed
\end{proof}

\begin{theorem}\label{thm:shapley}
  $\powercompare_\shapley$ is $\pp$-complete.
\end{theorem}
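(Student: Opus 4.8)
The plan is to harvest the infrastructure already in place: the upper bound comes straight from Lemma~\ref{thm:pp-member}, and the hardness comes from Lemma~\ref{thm:compare-via-ss}, with the overall skeleton mirroring the proof of Lemma~\ref{thm:pp-com} but routed through the (merely metric-complete, not parsimonious-complete) Shapley-Shubik index rather than through a parsimonious-complete function.

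For $\pp$-membership, I would first note that $\powercompare_\shapley$ is, modulo the minor definitional convention that both input games carry the same number of players $n$, simply $\compare_{\rawshapley}$: since $\shapley(G',i) = \rawshapley(G',i)/n!$ and $\shapley(G'',i) = \rawshapley(G'',i)/n!$ have the common denominator $n!$, the question $\shapley(G',i) > \shapley(G'',i)$ is equivalent to $\rawshapley(G',i) > \rawshapley(G'',i)$. As $\rawshapley \in \sharpp$, Lemma~\ref{thm:pp-member} yields $\compare_{\rawshapley} \in \pp$, hence $\powercompare_\shapley \in \pp$.

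For $\pp$-hardness, let $L$ be an arbitrary $\pp$ language, witnessed (as in the proof of Lemma~\ref{thm:pp-com}) by a polynomial-time relation $R$ and a polynomial $p$ with $x \in L \iff \|\{y \in \Sigma^{p(|x|)} \mid R(x,y)\ \mbox{holds}\}\| > 2^{p(|x|)-1}$. Define $g_1(x) = \|\{y \in \Sigma^{p(|x|)} \mid R(x,y)\ \mbox{holds}\}\|$ and $g_2(x) = 2^{p(|x|)-1}$; both are $\sharpp$ functions, and $x \in L \iff g_1(x) > g_2(x)$. Applying Lemma~\ref{thm:compare-via-ss} with $f := g_1$ and $g := g_2$ gives a polynomial-time computable $\cmp$ with $g_1(x) > g_2(y) \iff \cmp(x,y) \in \powercompare_\shapley$ for all $x,y$; the desired reduction is then $x \mapsto \cmp(x,x)$, for which $x \in L \iff g_1(x) > g_2(x) \iff \cmp(x,x) \in \powercompare_\shapley$. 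Since $L$ was arbitrary, $\powercompare_\shapley$ is $\pp$-hard, and with membership it is $\pp$-complete.

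I do not anticipate a genuine obstacle, since the heavy lifting was already done in Lemma~\ref{thm:compare-via-ss} (which in turn leaned on Theorem~\ref{thm:many-one}). The one point deserving a line of care is verifying that $\cmp(x,x)$ is actually a legal $\powercompare_\shapley$ instance, i.e., that the two games it outputs have equally many players; this holds because the construction inside Lemma~\ref{thm:compare-via-ss} uses Lemma~\ref{thm:x3c-transform} to force $\|{\cal S}_x\| = \|{\cal S}_y\|$, and the map $\varphi$ from the proof of Theorem~\ref{thm:many-one} turns a family of $r$ three-element sets into a weighted voting game with $r+1$ players, so the two output games match in player count.
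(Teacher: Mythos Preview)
Your proposal is correct and takes essentially the same approach as the paper: membership via Lemma~\ref{thm:pp-member} and hardness via Lemma~\ref{thm:compare-via-ss}. The only cosmetic difference is that the paper first fixes a $\sharpp$-parsimonious-complete $h$ and reduces from the $\pp$-complete $\compare_h$ (Lemma~\ref{thm:pp-com}) using $\cmph$, whereas you inline that step by starting from an arbitrary $\pp$ language and building $g_1,g_2$ directly---but this is exactly the content of Lemma~\ref{thm:pp-com}'s proof, so the arguments coincide.
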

\begin{proof}
  Via Lemma~\ref{thm:pp-member} it is easy to see that
  $\powercompare_\shapley$ is in $\pp$. 
  Let $h$ be some $\sharpp$-parsimonious-complete function.
  $\pp$-hardness of $\powercompare_\shapley$ follows via a reduction
  from $\pp$-complete problem $\compare_h$ (see
  Lemma~\ref{thm:pp-com}). As a reduction we can use, e.g.,
  the function $\cmph$ from Lemma~\ref{thm:compare-via-ss}.
  This completes the  proof.~\qed
\end{proof}

\section{Conclusions and Open Problems}

We have shown that the problem of deciding in which of the two given
voting games our designated player has a higher power index value is
$\pp$-complete for both the Banzhaf and the Shapley-Shubik power
indices.  For the case of Banzhaf, we have used the fact that the raw
Banzhaf power index is $\sharpp$-parsimonious-complete. For the case
of Shapley-Shubik, we have shown that the raw Shapley-Shubik power
index is $\sharpp$-many-one-complete but not
$\sharpp$-parsimonious-complete. Nonetheless, using the index's
properties we were able to show the $\pp$-completeness of
$\powercompare_\shapley$.  We believe that these results are
interesting and practically important. Below we mention one particular
application. %

In the context of multiagent systems, the Shapley-Shubik power
index is often used to distribute players' payoffs, i.e.,
each player's payoff is proportional to his or her power index value.
Recently Elkind~\cite{elk:perscomm:joining} 
asked
about the exact
complexity of the following problem: Given a weighted voting game $G =
(w_1,w_2,\ldots, w_n;q)$, is it profitable for players $1$ and $2$ to
join? That is, if $G' = (w_1+w_2, w_3, \ldots, w_n; q)$, is it the
case that
$\shapley(G',1) > \shapley(G,1) + \shapley(G,2)$. Using
Lemma~\ref{thm:compare-via-ss} and the fact that $\sharpp$ is closed
under addition we can easily show that this problem reduces to
$\powercompare_\shapley$ and thus is in $\pp$. We believe that
Elkind's problem is, in fact, $\pp$-complete and that the techniques
presented in this paper will lead to the proof of this fact.
However,
at this point the exact complexity of the problem remains 
open.
\paragraph*{Acknowledgments}
We thank J{\"o}rg Rothe and Edith Elkind
for helpful discussions on the topic of weighted voting games and
J{\"o}rg Rothe for hosting a visit during which this work was done in
part.

\bibliography{grypiotr2006}

\end{document}